\newcommand{\RegionGrow}{\textsc{RegionGrowing}}
\newcommand{\PathCutting}{\textsc{PathCutting}}
\newcommand{\curr}{\textnormal{curr}}
\begin{document}
\title{
  Node-Weighted Multicut in Planar Digraphs\thanks{Siebel School of Computing and Data
    Science, University of Illinois, Urbana-Champaign, Urbana, IL
    61801. {\tt \{chekuri,rheaj3\}@illinois.edu}. Supported in part by
    NSF grant CCF-2402667.}}
\author{Chandra Chekuri \and Rhea Jain}
\date{}
\maketitle

\begin{abstract}
  Kawarabayashi and Sidiropoulos \cite{KS22} obtained an
  $O(\log^2 n)$-approximation algorithm for Multicut in \emph{planar
    digraphs} via a natural LP relaxation, which also establishes a
  corresponding upper bound on the multicommodity flow-cut gap. Their
  result is in contrast to a lower bound of $\tilde{\Omega}(n^{1/7})$
  on the flow-cut gap for general digraphs due to Chuzhoy and Khanna
  \cite{ChuzhoyK}. We extend the algorithm and analysis
  in \cite{KS22} to the \emph{node-weighted} Multicut problem. Unlike
  in general digraphs, node-weighted problems cannot be reduced to
  edge-weighted problems in a black box fashion due to the planarity
  restriction.  We use the node-weighted problem as a vehicle to
  accomplish two additional goals: (i) to obtain a deterministic algorithm
  (the algorithm in \cite{KS22} is randomized), and (ii) to simplify
  and clarify some aspects of the algorithm and analysis from
  \cite{KS22}. The Multicut result, via a standard technique, implies
  an approximation for the Nonuniform Sparsest Cut problem with an
  additional logarithmic factor loss.
\end{abstract}
\section{Introduction}
\label{sec:intro}
The well-known $s$-$t$ maxflow-mincut theorem is a cornerstone result
in discrete and combinatorial optimization with many consequences and
applications \cite{Schrijver-book,AMO-book,Williamson-book}.
Multicommodity flows, where multiple demand pairs aim to route flow
simultaneously, and related cut problems, are also of significant
importance and are extensively
studied. Unlike in the single-commodity setting, flow-cut equivalence
does not hold in these more general settings. Motivated by
applications to problems such as Multicut, Sparsest Cut, routing
problems, and others, there is extensive work in establishing a
quantitative relationship between multicommodity flows and cuts --- we
refer to this broad set of results as flow-cut gaps.  
Leighton and
Rao \cite{LeightonR}, in their seminal work,
established a tight $\Theta(\log n)$ flow-cut gap for the uniform
sparsest cut problem in undirected graphs and used it in a host of
applications. Garg, Vazirani, and Yannakakis \cite{GVY} used the
region growing technique from \cite{LeightonR} to obtain a tight
$\Theta(\log n)$-approximation and flow-cut gap for the Multicut
problem in undirected graphs. Linial, London, and Rabinovich
\cite{LLR} and Aumann and Rabani \cite{AumannR98} obtained a tight
$\Theta(\log n)$-approximation and flow-cut gap for non-uniform
sparsest cut via a connection to $\ell_1$-embeddings.

Planar and
minor-free graphs are important special classes of graphs for
which improved results are known. Klein, Plotkin and Rao \cite{KPR}
obtained an $O(1)$-approximation and flow-cut gap for Multicut and
uniform sparsest cut. The well-known GNRS conjecture of Gupta et al.
\cite{GNRS} states that the flow-cut gap for non-uniform sparsest cut
is $O(1)$ for planar and minor-free graphs. Rao \cite{Rao} showed
that it is $O(\sqrt{\log n})$. These are but a small number of
important results on cut problems; we elaborate more in \Cref{sec:rel_work}.

In this paper, we are concerned with multicommodity flows and cuts in
\emph{directed} graphs. In particular, we are motivated by
approximation algorithms for the Multicut problem, which we state formally.

\medskip
\noindent {\bf Multicut.} Given a directed graph $G=(V,E)$ with
non-negative edge-weights $c(e)$ for all $e \in E$ and $k$ source-sink
pairs $(s_1,t_1), \ldots, (s_k,t_k)$, find a min-cost subset
$E' \subseteq E$ of edges such that there is no path from $s_i$ to
$t_i$ for any pair $(s_i,t_i)$ in the graph $G-E'$. In
\emph{node-weighted} Multicut, each node $v$ has a cost/weight $c(v)$,
and the goal is to remove a min-cost subset $V' \subseteq V$ such that
$G-V'$ has no $s_i$-$t_i$ path for any of the input pairs. We assume
that sources and sinks have infinite cost so that they cannot be
removed, otherwise we can add dummy terminals to satisfy this assumption.

There is a natural LP relaxation for Multicut (see \Cref{sec:results} and \Cref{sec:prelim}), 
and the dual of the LP
relaxation corresponds to the maximum throughput multicommodity flow
problem.  Thus, the integrality gap of this LP is the same as the
flow-cut gap. Closely related to the Multicut problem is the
Non-uniform Sparsest Cut problem, which we abbreviate to Sparsest Cut
in this paper.

\medskip
\noindent
{\bf Sparsest Cut.}  Given a directed graph $G=(V,E)$ with non-negative
edge-weights $c(e)$ for all $e \in E$ and $k$ source-sink pairs $(s_1,t_1),
\ldots, (s_k,t_k)$, each with a non-negative demand $D_i$, find a
subset $E' \subseteq E$ of edges to minimize the ratio
$\frac{c(E')}{\text{dem}(E')}$; here $\text{dem}(E')$ is the sum over
demands of all of the given pairs that are separated in $G - E'$.
In the \emph{node-weighted} Sparsest Cut problem each node $v$ has a
cost/weight $c(v)$, and the goal is to remove
a subset $V' \subseteq V$ to minimize $\frac{c(V')}{\text{dem}(V')}$,
where $\text{dem}(E')$ is the demand of the pairs separated in
$G-V'$. 

\begin{remark}
  It is common to define sparsest cut in undirected graphs as finding
  a vertex subset $S$ with smallest ratio $c(\delta(S))/\text{dem}(S)$
  where $\text{dem}(S)$ is the demand crossing the set $S$. The edge-based 
  definition and the vertex-subset-based definition coincide in
  undirected graphs, but they are not the same in directed graphs. 
\end{remark}

\medskip In contrast to undirected graphs, where tight
logarithmic factor flow-cut gaps are known for both Multicut and
Sparsest Cut, the corresponding flow-cut gaps in directed graphs are
significantly larger.  Saks, Samorodnitsky and Zosin \cite{SaksSZ}
showed that the Multicut flow-cut gap in directed graphs can be
$k(1-o(1))$ where $k$ is the number of pairs --- note that an upper
bound of $k$ is easy.  In their construction $k = O(\log
n)$. Subsequently, Chuzhoy and Khanna \cite{ChuzhoyK} showed a lower
bound of $\tilde{\Omega}(n^{1/7})$, and also established almost
polynomial-factor hardness of Multicut in directed graphs.  

More recently, Kawarabayashi and Sidiropoulos \cite{KS22} obtained an interesting
positive result in \emph{planar} digraphs. A planar digraph is
a directed graph obtained from a planar multi-graph by orienting
each of its undirected edges.

\begin{theorem}[\cite{KS22}]
  There is an $O(\log^2 n)$-approximation for Multicut in planar
  digraphs via the natural LP relaxation.
\end{theorem}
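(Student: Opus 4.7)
The plan is to round the natural LP relaxation, following the template of Garg-Vazirani-Yannakakis but augmenting it with a planar-specific cutting primitive. After solving the LP I obtain edge lengths $d^*:E\to\mathbb{R}_{\geq 0}$ satisfying $\sum_{e\in P} d^*(e)\geq 1$ for every directed $s_i$-$t_i$ path $P$, with $\sum_e c(e)\,d^*(e)=\mathrm{OPT}_{\mathrm{LP}}$. The goal is to extract from $d^*$ a feasible integral multicut $F\subseteq E$ with $c(F)=O(\log^2 n)\,\mathrm{OPT}_{\mathrm{LP}}$, from which the integrality-gap (equivalently, the approximation) bound follows.

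The rounding algorithm alternates two subroutines, suggested by the two macros in the preamble. In \RegionGrow, for some chosen terminal I grow a ball in the directed $d^*$-metric and pick a radius at which the cost of the boundary edges is $O(\log n)$ times the enclosed LP volume; these boundary edges enter $F$. In undirected graphs this alone suffices, but in digraphs a directed out-ball from $s_i$ need not separate $s_i$ from $t_i$, as demonstrated by the $\Theta(k)$ flow-cut gap constructions of Saks-Samorodnitsky-Zosin in general digraphs. The second subroutine \PathCutting exploits planarity to patch this: given a shortest $s_i$-$t_i$ path $P$ in $d^*$, I use the planar embedding and the duality between directed paths and cuts to select a small set of additional edges whose removal disconnects $(s_i,t_i)$ and partitions the residual graph into strictly smaller planar subinstances on disjoint terminal subsets.

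Iterating the two subroutines yields a recursive decomposition of the planar digraph. Each \RegionGrow step loses an $O(\log n)$ factor against the LP volume inside its ball, by the standard logarithmic integration argument. Each \PathCutting step charges its cost against the $\Theta(1)$ LP length of $P$ together with amortized credit from region-growing at the same recursion level, and crucially produces subinstances of strictly smaller planar complexity --- for instance, measured by a product of face count and unresolved-terminal count --- so the recursion depth is $O(\log n)$. The product of the per-level factor and the depth gives the claimed $O(\log^2 n)$ bound.

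The main obstacle I anticipate is the analysis of \PathCutting: pinpointing exactly which edges near $P$ must be added to $F$ to disconnect $(s_i,t_i)$ in the directed planar setting, charging their cost to the LP in an amortized fashion, and identifying a size measure that strictly decreases on each recursive call. This is where planarity is indispensable, since in general digraphs no constant-depth decomposition of this kind can exist. Derandomization of the radius selection in \RegionGrow is routine via enumeration over the $O(n)$ candidate threshold radii, and any randomized choices inside \PathCutting can be replaced deterministically by scanning over structural candidates --- paths or cuts drawn from the shortest-path DAG of $d^*$ --- and picking one that meets the amortized accounting criterion.
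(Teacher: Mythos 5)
Your outline reproduces the broad two-phase shape of the argument (region growing plus a planar cutting primitive, iterated over $O(\log n)$ levels) but misses the two concrete ingredients that make it work, and substitutes an undeveloped idea in their place. First, there is no layering step in your proposal. Before any separator-based divide and conquer can begin, the graph must be decomposed into pieces that are $\delta$-bounded in the $d^*$-metric, i.e., each piece has a root that can reach (or be reached by) every vertex within distance $\delta$. This is achieved by alternating in- and out-ball region growing from a root and removing the boundaries, so that every surviving path lies in two consecutive layers; it is not planar-specific but it is load-bearing. Without it, the separator paths used in the next step could have arbitrarily large $d^*$-length, and the region-growing cost bound (which scales like $O(\log n /\delta)$ times the enclosed LP volume) would not say anything about feasibility.

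Second, the planar primitive is the wrong one. You propose applying the planar duality between directed paths and cuts to a shortest $s_i$-$t_i$ path, but any such path has $d^*$-length at least $1$ by LP feasibility, and the duality idea is not developed into an actual cut-extraction rule or a charging scheme. What the argument actually uses is Thorup's shortest-path separator theorem: in a $\delta$-bounded planar digraph with a shortest-path tree from the root, there exist three root-to-vertex shortest paths, each of $d^*$-length at most $\delta$, whose removal leaves weakly connected components of size at most $n/2$. This halving of the vertex count gives the clean $O(\log n)$ recursion depth; your proposed potential ``product of face count and unresolved-terminal count'' is not justified and I see no reason it would shrink geometrically. Finally, \PathCutting does not use cut/cycle duality at all --- it runs region growing sequentially along the separator path, cutting out-balls from the tail of $P$ and in-balls from the head of $P$ in order (the sequential order is essential in the directed setting), so that any surviving $u$-$v$ path through $P$ forces $d^*(u,v)\leq 3\delta$. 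The balls are grown around separator-path vertices, not around terminals as in your \RegionGrow, and the cost is charged to the LP volume of the current piece rather than to any particular demand pair. These omissions leave the recursion, the charging, and the feasibility argument incomplete, so the $O(\log^2 n)$ bound does not follow from what you have written.
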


They use a known reduction from Sparsest Cut to Multicut via the
LP \cite{Kahale,Shmoys97,AgarwalAC07} to obtain the following corollary.

\begin{corollary}[\cite{KS22}]
  There is an $O(\log^3 n)$-approximation for Sparsest Cut in planar
  digraphs via the natural LP relaxation.
\end{corollary}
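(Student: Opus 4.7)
The approach is to invoke the standard LP-based reduction from Sparsest Cut to Multicut of \cite{Kahale,Shmoys97,AgarwalAC07}, using the $O(\log^2 n)$-approximation from the preceding theorem as a black-box subroutine. I would first solve the natural Sparsest Cut LP on the given planar digraph, obtaining fractional edge variables $x_e$ and fractional pair-distances $d_i$ for each pair $(s_i,t_i)$, normalized so that $\sum_i D_i d_i = 1$; the LP value $\beta^*_{LP} = \sum_e c(e) x_e$ lower-bounds the optimal sparsity.

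Next, I would bucket the pairs by the magnitude of $d_i$, letting $G_j = \{i : d_i \in [2^{-j-1}, 2^{-j})\}$ for $j = 0, 1, \ldots, O(\log n)$ (pairs with $d_i$ smaller than an inverse polynomial in $n$ can be truncated after a negligible perturbation). Since $\sum_i D_i d_i = 1$ and there are only $O(\log n)$ buckets, some index $j^*$ satisfies $\sum_{i \in G_{j^*}} D_i d_i = \Omega(1/\log n)$, and hence $\sum_{i \in G_{j^*}} D_i = \Omega(2^{j^*}/\log n)$. For each bucket $j$, I would form a Multicut instance on the same digraph and costs with terminal pairs $G_j$; the rescaled vector $2^{j+1} x$ is a feasible fractional multicut of cost at most $2^{j+1} \beta^*_{LP}$. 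Applying the preceding theorem yields an integral cut $E_j$ with $c(E_j) \le O(\log^2 n) \cdot 2^{j+1} \beta^*_{LP}$, and I would output the $E_j$ minimizing $c(E_j)/\text{dem}(E_j)$. For $j = j^*$ the ratio is $O(\log^3 n) \cdot \beta^*_{LP}$, so the output is an $O(\log^3 n)$-approximation via the LP, and hence the same upper bound on the integrality/flow-cut gap.

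The only point requiring care is the bucketing: ensuring that truncating tiny $d_i$'s does not destroy the $\Omega(1/\log n)$ heavy-bucket guarantee. A clean fix is to impose a floor $d_i \ge 1/\text{poly}(n)$ (or add a small uniform perturbation), keeping only $O(\log n)$ nontrivial buckets while changing the LP value by a $1+o(1)$ factor; this is standard in such reductions. Everything else is entirely black-box in the Multicut LP rounding, so the $O(\log^2 n)$ bound from the preceding theorem immediately yields the claimed $O(\log^3 n)$-approximation for Sparsest Cut.
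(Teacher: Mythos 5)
Your reconstruction is correct and matches the route the paper intends: the paper does not prove this corollary itself but cites the standard LP-based reduction from Sparsest Cut to Multicut \cite{Kahale,Shmoys97,AgarwalAC07}, later describing it as ``geometric grouping over a logarithmic number of scales,'' which is exactly the bucketing-by-$d_i$-scale, rescale-by-$2^{j+1}$, black-box-round-each-bucket argument you give, including the $O(\log n)$ overhead that turns $O(\log^2 n)$ into $O(\log^3 n)$. Your handling of the truncation of tiny $d_i$'s is the standard fix and is adequate for the polynomially-bounded-demand setting implicit in the $O(\log n)$-bucket count.
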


The preceding results are likely to extend to digraphs over minor-free
graphs. One would then obtain good algorithmic
results for various cut problems in large families of digraphs that arise
in several applications. These results also provide impetus to understand tight
flow-cut gaps for planar digraphs and can potentially lead to other
applications. For instance, some of the ideas in \cite{KS22} have
inspired recent work in network design on planar digraphs
\cite{FriggstadM23,ChekuriJKZZ24,ChekuriJ25}.

\subsection{Result and Technical Overview}
\label{sec:results}

The results in \cite{KS22} are for the edge-weighted versions of
Multicut and Sparsest Cut. It is easy to see that edge-weighted
problems reduce to the node-weighted case. In general digraphs, one can
do a simple and standard reduction to reduce node-weighted problems to
edge-weighted problems as follows. Split a node $v$ into two nodes
$v_{in}$ and $v_{out}$ and add the edge $(v_{in}, v_{out})$ with cost
$c(v)$. For each original edge $(u,v)$ we add the edge
$(v_{out},u_{in})$ with cost $\infty$.  This reduction, however, does
\emph{not} preserve planarity and hence we cannot use \cite{KS22} as a
black box.  We note that in undirected graphs, there is no reduction
from node-weighted to edge-weighted
Multicut and Sparsest Cut. In fact, the $\ell_1$-embedding
technique for Sparsest Cut is not the right tool for node-weighted
problems; known results are based on line embeddings and
various other refinements \cite{FeigeHL08, BrinkmanKL07,CKRV,LeeMM15},
and our understanding
of node-weighted problems is less extensive. A natural question in
light of these considerations is whether the result of Kawarabayashi
and Sidiropoulos \cite{KS22} extends to the node-weighted setting. In
this paper we show that this is indeed the case via the following
theorem.

\begin{theorem}
\label{thm:main}
  There is an efficient deterministic $O(\log^2 n)$-approximation 
  algorithm for node-weighted Multicut in directed planar graphs via
  the natural LP relaxation.
\end{theorem}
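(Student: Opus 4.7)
The plan is to solve the natural node-LP and then mirror the two-phase structure of \cite{KS22}, replacing edge-based distances and cuts everywhere by their node-weighted analogs so that the planarity of the input is never disturbed by a node-to-edge split. The LP assigns a fractional length $x_v \in [0,1]$ to every non-terminal node, with $\sum_{v \in \mathrm{int}(P)} x_v \ge 1$ for every directed $s_i$-$t_i$ path $P$; a shortest-path separation oracle makes it solvable in polynomial time. Let $x^*$ be an optimal solution and $d_{x^*}$ the induced node-weighted shortest-path distance in $G$. The goal is to round $x^*$ to an integral node cut of cost $O(\log^2 n) \cdot \sum_v c(v) x^*_v$.

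The first phase, a node-weighted \RegionGrow, adapts Garg--Vazirani--Yannakakis. For each source $s_i$ (and symmetrically each sink $t_i$) I grow a ball $B(s_i, r_i)$ in $d_{x^*}$ and add to the cut the ``shell'' of boundary nodes just inside the ball. A standard amortized charging shows that for some $r_i \in [0, 1/4]$ the $c$-cost of the shell is at most $O(\log n)$ times the fractional $c$-cost inside $B(s_i, r_i)$; I pick such an $r_i$ deterministically by enumerating the polynomially many distance breakpoints of $d_{x^*}$ and taking the smallest valid threshold. After Phase~I every surviving pair $(s_i, t_i)$ has both endpoints trapped in a common ball of $d_{x^*}$-radius at most $1/2$.

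The second phase, \PathCutting, is where planarity enters substantively and is the main technical step. For each ball I would build a rooted skeleton (for instance a shortest-path arborescence from the ball's center in $d_{x^*}$) and use the planar embedding to reduce the remaining separation task to a topologically structured instance on which a further round of region growing yields another $O(\log n)$ factor. The main obstacle is translating the curve/face arguments of \cite{KS22} to the node setting: the cutting objects must be vertices rather than edges, so one must reinterpret KS22's dual cut structure so that each removed node $v$ contributes $x^*_v$ as the ``width'' of a curve segment in the analysis, while the true planar embedding of $G$ is preserved throughout. Because sources and sinks have infinite cost, both procedures must route around terminals rather than through them, which can be enforced by excluding terminals from the growable shells and from the arborescence's interior.

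Derandomization is essentially automatic: \cite{KS22} uses randomness only for threshold selection, and in both phases the averaging arguments guarantee a good threshold among polynomially many candidates, which I enumerate deterministically and take the best. Combining the two $O(\log n)$ losses yields the claimed $O(\log^2 n)$ approximation, and since the analysis compares the integral solution directly to the LP optimum, it simultaneously bounds the node-weighted flow-cut gap for planar digraphs by the same factor.
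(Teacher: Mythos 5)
Your proposal does not match the paper's algorithm, and as written it contains steps that cannot work.

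\smallskip

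Your Phase~I is not what the algorithm does, and its stated outcome is impossible. You grow a ball $B(s_i,r_i)$ of radius $r_i \le 1/4$ around each terminal, cut the shell, and then assert that ``after Phase~I every surviving pair $(s_i,t_i)$ has both endpoints trapped in a common ball of $d_{x^*}$-radius at most $1/2$.'' That cannot happen: the LP constraint forces $d_{x^*}(s_i,t_i)\ge 1$, so $t_i$ never lies in $B^+(s_i,1/2)$. More importantly, this kind of terminal-centered region growing is exactly what is known to \emph{fail} for directed Multicut: cutting the out-boundary of $B^+(s_i,r)$ does not cut directed paths that enter the ball from outside, so a pair with $t_j$ inside and $s_j$ outside survives. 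The $\tilde\Omega(n^{1/7})$ flow-cut gap of Chuzhoy--Khanna shows no purely local region-growing argument can succeed in general digraphs. You also need to remove each ball from the graph before growing the next one so that the GVY charging is against disjoint volumes; growing balls independently from all $k$ terminals can lose a factor $k$, not a factor $O(\log n)$. The paper's first step is a different object entirely: a \emph{layering decomposition} of the whole graph into $\delta$-bounded layers, built by alternating out-ball and in-ball region growing around a contracted supernode, with the guarantee (Lemma~\ref{lem:layering_main}) that after deleting a low-cost boundary set every remaining directed path is confined to two consecutive layers.

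\smallskip

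Your Phase~II is too vague to count as a proof and invokes machinery the algorithm does not use. The essential planar tool is Thorup's rooted shortest-path separator (Lemma~\ref{lem:thorup_undir}, Corollary~\ref{cor:thorup_dir}): in a $\delta$-bounded layer one finds three directed shortest paths $P_1,P_2,P_3$ from the root, each of length $\le\delta$, whose removal leaves weakly connected components of at most half the vertices. Around each $P_j$ the algorithm sequentially grows in-balls from the start of the path and out-balls from the end (Lemma~\ref{lem:one_path_main}), which guarantees that any surviving path touching $P_j$ has its endpoints within $3\delta$ of each other, and then recurses on the halved components; the $O(\log n)$ recursion depth times the $O(\log n/\delta)$ per-level region-growing cost gives the $O(\log^2 n)$ factor. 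Your ``rooted skeleton,'' ``curve/face arguments,'' and ``dual cut structure'' do not correspond to any of this; no planar dual appears in the argument, and the only place planarity is used is to invoke Thorup's separator.

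\smallskip

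Finally, the derandomization claim is inaccurate. The randomness in Kawarabayashi--Sidiropoulos is not just in threshold selection: their layering step uses a sequence of random shifts, and the paper states there is no straightforward way to enumerate and derandomize those. The substance of the deterministic algorithm here is to replace the randomly shifted layering by deterministic GVY-style region growing in both the layering and the path-cutting steps; that is a design change, not a post-hoc enumeration.
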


Via the standard reduction from Sparsest Cut to Multicut via the LP
that we alluded to earlier, we obtain the following.

\begin{corollary}
  There is an efficient deterministic $O(\log^3 n)$-approximation for
  the node-weighted Sparsest Cut problem in directed planar graphs via
  the natural LP relaxation.
\end{corollary}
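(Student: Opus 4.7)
The plan is to derive the corollary from \Cref{thm:main} via the standard bucketing reduction from Sparsest Cut to Multicut through the LP, adapted to the node-weighted setting. First I would write down the natural LP relaxation for node-weighted Sparsest Cut: variables $x_v \ge 0$ for each non-terminal node and $y_i \ge 0$ for each pair, with constraints $y_i \le d_x(s_i,t_i)$ (where $d_x$ is the shortest $s_i$-$t_i$ node-distance using weights $x$, with sources and sinks assigned weight $0$), the normalization $\sum_i D_i y_i \ge 1$, and objective $\min \sum_v c(v)\, x_v$. Let $(x^*,y^*)$ be an optimal LP solution with value $L^* \le \mathrm{OPT}$.

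Next, I would bucket the pairs by their LP separation value. For $j = 1, \ldots, \lceil \log_2 n \rceil + O(1)$, let $S_j = \{ i : y_i^* \in (2^{-j}, 2^{-j+1}] \}$; pairs with $y_i^* < 1/\mathrm{poly}(n)$ can be discarded with negligible loss since their total contribution to $\sum_i D_i y_i^*$ is tiny (using that the LP can be preprocessed so that any nonzero $y_i^*$ is at least $1/\mathrm{poly}(n)$). By an averaging argument, there is some bucket $j^*$ with $\sum_{i \in S_{j^*}} D_i y_i^* \ge \Omega(1/\log n)$, which implies $\sum_{i \in S_{j^*}} D_i \ge \Omega(2^{j^*}/\log n)$.

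Then I would scale to form a Multicut instance. Consider the node-weighted Multicut instance on $G$ with the same weights $c$ but only the pairs in $S_{j^*}$. The rescaled vector $\hat x = 2^{j^*} \cdot x^*$ is feasible for the Multicut LP, since for every $i \in S_{j^*}$ the shortest-path distance satisfies $d_{\hat x}(s_i,t_i) \ge 2^{j^*} y_i^* \ge 1$. Its LP cost is $2^{j^*} L^*$. Applying the algorithm of \Cref{thm:main} yields a node set $V'$ separating all pairs in $S_{j^*}$ with $c(V') \le O(\log^2 n) \cdot 2^{j^*} L^*$. The sparsity of $V'$ is therefore
\[
\frac{c(V')}{\mathrm{dem}(V')} \;\le\; \frac{O(\log^2 n)\, 2^{j^*} L^*}{\sum_{i \in S_{j^*}} D_i} \;\le\; O(\log^3 n) \cdot L^* \;\le\; O(\log^3 n) \cdot \mathrm{OPT}.
\]

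Finally, since we do not know $j^*$ a priori, the algorithm tries every bucket $j \in \{1,\ldots,O(\log n)\}$, runs the Multicut algorithm from \Cref{thm:main} on each resulting instance, and returns the cut with the smallest ratio. The main technical point to verify is that the reduction does not rely on edge weights in any essential way: the LP formulation, the scaling step, and the feasibility argument all go through verbatim for node weights, because the Multicut LP of \Cref{thm:main} is also node-based and uses the same notion of shortest-path distance. Determinism and the planarity of $G$ are preserved throughout, so the $O(\log^3 n)$-approximation is deterministic and relative to the LP, as claimed.
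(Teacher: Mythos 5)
Your proposal is the standard bucketing reduction from Sparsest Cut to Multicut that the paper invokes by citation (\cite{Kahale,Shmoys97,AgarwalAC07}) without spelling it out, so it is the intended route; the structure (LP, dyadic bucketing, scaling to a feasible Multicut LP, applying \Cref{thm:main}, trying all $O(\log n)$ buckets deterministically) is correct.

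The one step you should tighten is the handling of pairs with very small $y_i^*$. As written, you simultaneously claim that small-$y_i^*$ pairs are \emph{discarded} with negligible loss and that a preprocessing step forces every nonzero $y_i^*$ to be at least $1/\mathrm{poly}(n)$; these two statements are in tension, and the preprocessing claim is not justified (LP basic-solution bit-complexity bounds only give $2^{-\mathrm{poly}}$, not $1/\mathrm{poly}(n)$, and the algorithm should not depend on which optimal solution the solver returns). The clean argument is to use scale invariance of the Sparsest Cut objective and of the LP: rescale the demands so that, say, $\sum_i D_i \le n^2$. Then the pairs with $y_i^* < 1/(2n^2)$ contribute at most $1/2$ to $\sum_i D_i y_i^* \ge 1$, so the remaining pairs carry at least half of the mass and lie in $O(\log n)$ dyadic buckets, from which your averaging step picks one with $\Omega(1/\log n)$ mass. (Pairs with $y_i^* > 1$, which your LP permits since you omit the cap $y_i \le 1$, can either be capped at $1$ without loss or placed in a $j=0$ bucket; either fix is fine.) With that adjustment the proof is complete.
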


Node-weighted problems are a natural and well-motivated generalization
of edge-weighted problems and thus our work yields a useful extension
of the results from \cite{KS22}.  We also believe the ideas can extend to
the polymatroid network model, which is a further generalization of
node-weights with a variety of applications --- we omit a
detailed discussion due to the technical nature of the model and refer
the interested reader to past and recent work \cite{CKRV, LeeMM15,
ChenOT25, ChekuriL24}. 
Our second motivation is to clarify the core
techniques in \cite{KS22} to enable future improvements
and extensions. 

We sketch the approach of \cite{KS22} and explain
along the way how we obtain a deterministic algorithm that
extends to node-weights. We start with the standard LP relaxation for
edge-weighted Multicut in digraphs. For each edge $e \in E$ there is a
variable $x_e \in [0,1]$ indicating (in the integer programming
setting) whether $e$ is cut. For $i \in [k]$, let $\calP_i$ denote the set of all
$s_i$-$t_i$ paths. To separate the pair, each path $p \in P_i$ must
have an edge on it deleted. This hitting set LP relaxation can be
written with an exponential number of constraints as given below.

\begin{align*}
  \min \sum_{e \in E} c(e) x_e& \\
  \text{subject to } \sum_{e \in p} x_e &\geq 1 &\forall p \in \calP_i, \forall i \in [k] \\
  x_e &\geq 0 &\forall e \in E
\end{align*}

The LP can be solved in polynomial-time since the separation oracle
corresponds to the shortest path problem. The dual of the preceding LP
relaxation is easily seen to be the LP for the maximum throughput
multicommodity flow for the given pairs. As remarked previously,
establishing an integrality gap of $\alpha$ on the Multicut LP is the
same as establishing a flow-cut gap of $\alpha$. The algorithm of
\cite{KS22} can be viewed as a \emph{randomized} rounding algorithm that, 
given a feasible solution ${\bf x}$ to the LP relaxation, outputs a
feasible multicut $E'$ for the given instance such that
$\Pr[e \in E'] \le \alpha \cdot x_e$. In \cite{KS22}, the authors show
that $\alpha = O(\log^2 n)$ suffices.
It is helpful to think of the $x_e$ values as lengths; with this
viewpoint, $E'$ is feasible if all pairs of distance $\geq 1$ are 
separated in $G \setminus E'$.

\medskip

\noindent {\bf Randomized region growing and divide and conquer.}
The approach of \cite{KS22} is based on a divide and conquer approach
that relies on the fact that planar digraphs have rooted shortest-path 
separators; this was implicitly proven by Lipton and Tarjan 
\cite{LiptonTarjan79} and made explicit by Thorup \cite{Thorup}.
The result states that given any \emph{undirected}
planar graph with a spanning tree $T$ rooted at $r$, there are three
nodes $u_1, u_2, u_3$ such that if we were to remove the unique tree
paths connecting $r$ to $u_i$ for $i = 1,2,3$ (the union of these
paths is called the \emph{separator}), each resulting connected
component would have at most $n/2$ nodes. This lends itself nicely to
an algorithmic framework in which we solve the problem on the
separator and then recursively handle each connected component
individually.  The planar separator result can be applied to directed
graphs by considering the underlying undirected graph.  In this case,
the removal of the separator results in weakly connected components
which each have at most $n/2$ nodes.  However, the separator may not
consist of paths in the directed sense.  Thus, the first step of
\cite{KS22} is to preprocess the graph via a decomposition into
layers, each of which has a root that can either reach or be reached
by all nodes in its layer. This allows the use of the separator result
on a spanning directed in-tree or out-tree, ensuring that each of the
three paths on the separator is a directed path.  The main technical
ingredient after these steps (which are inherited to some extent from
the data structure work in \cite{Thorup}) is to solve the problem on
the separator; this corresponds to cutting long $s_i$-$t_i$ paths that
intersect the separator. For this cutting procedure, \cite{KS22}
describes a randomized rounding algorithm which can be viewed as
Bartal's ball cutting procedure \cite{Bartal} applied to directed
graphs. We observe that the authors use Bartal's scheme because they
need to apply it multiple times and the memory-less property of the
scheme is crucial.

\medskip
\noindent {\bf A deterministic algorithm.} We follow the same
high-level approach. In order to generalize to node-weights and
obtain a deterministic algorithm, we 
replace the use of Bartal's ball growing procedure with 
a standard region growing
lemma of Garg, Vazirani and Yannakakis \cite{GVY}. This lemma is used in several
Multicut variants, including in directed graphs, and is designed
precisely for sequential cutting to enable one to charge the total
cost of the edges to the LP solution. This simplifies the analysis and
also makes it transparent where one is using an algorithm designed for
general digraphs versus where planarity is important (and can be
potentially exploited in the future for further improvements). The
deterministic region growing lemma is easier to generalize to the
node-weighted setting. We note that the algorithm in \cite{KS22} is also randomized
because of a series of random shifts needed in the first layering 
decomposition step. There is no straightforward way to
derandomize this. Instead, we use deterministic region growing for
this step. Although we lose a logarithmic factor extra in our
procedure, this does not affect the final approximation factor.

\medskip
\noindent {\bf Discussion on quasi-metrics.}
Kawarabayashi and Sidiropoulos \cite{KS22} cast their rounding
algorithm in the language of probabilistic quasi-metric
embeddings and black-box the connection to Multicut and Sparsest Cut. 
This notion was developed
\cite{MemoliSS18,salmasi_constant_2019} in order to generalize the
metric embedding machinery to directed graphs. While this is a useful
perspective, we believe that this adds an extra layer of notational
overhead in the context of the results of \cite{KS22}.
To justify our
comment, we observe the following. Suppose we are given an
\emph{arbitrary} approximation algorithm for Multicut via the LP with
an approximation ratio of $\alpha$. Then, via LP duality, one can
argue the following: given any feasible LP solution ${\bf x}$, there
is an explicit and efficiently computable probability distribution
over feasible integral Multicut solutions such that sampling a
solution from the distribution cuts any fixed edge $e \in E$ with
probability at most $\alpha \cdot x_e$. This is standard and goes via
the arguments outlined in the work of Carr and Vempala
\cite{CarrV}. Thus, establishing integrality gaps for the Multicut LP
is the same as obtaining quasi-metric partitioning schemes --- in the
undirected settings this corresponds to the connection between
Multicut approximations and low-diameter decompositions. The embedding
machinery is particularly important and powerful when one is working
with multiple distance scales simultaneously. This is
needed if one wants to obtain tight approximations for the Sparsest
Cut problem. For instance, the $\ell_1$-embedding result of Bourgain
enabled the Sparsest Cut approximation in undirected graphs to be
improved from $O(\log^2 k)$ to $O(\log k)$
\cite{LLR,AumannR98}. However, in \cite{KS22}, the Sparsest Cut
approximation is obtained via the Multicut approximation
via a simple reduction that uses geometric grouping
over a logarithmic number of scales. For this reason, we adopt a direct approach
and describe our approximation algorithm for Multicut as 
a randomized rounding procedure that
outputs a feasible solution in which each vertex $v$ is deleted
with probability $O(\log^2  n) x_v$, where $x_v$ is the
fractional value assigned to $v$. We hope that our extension of the result in
\cite{KS22} to node-weights helps delineate the ideas that
may be needed to improve the edge-weighted case. 
In this context, we note
that some results for edge-weighted Sparsest Cut in undirected graphs,
such as that of Rao \cite{Rao} for planar and minor-free graphs, do
not yet extend to node-weighted Sparsest Cut.

\subsection{Other Related Work}
\label{sec:rel_work}
There is a vast literature on multicommodity flows and cuts, their
applications, and graph partitioning problems. The earlier discussion outlines 
some basic results, though they are many recent ones. 
We have mainly discussed algorithms for Multicut and
Sparsest Cut via flows. However, there are several other relevant techniques.

SDP-based algorithms for yield improved approximations
for Sparsest Cut. For the uniform case in undirected graphs, Arora, Rao
and Vazirani \cite{ARV} obtained an $O(\sqrt{\log n})$-approximation,
and for the non-uniform case, Arora, Lee and Naor \cite{ALN} obtained an
$O(\sqrt{\log n} \log \log n)$-approximation.  SDP-based
algorithms and analysis have been extended to several settings,
including to uniform sparsest cut in directed graphs by Agarwal et
al. \cite{AgarwalCMM05}; in particular, there is an
$O(\sqrt{\log n})$-approximation. Sparsest Cut in
undirected planar graphs admits
a quasi-polynomial time $(2+\eps)$-approximation
\cite{CohenAddadGKL21}
based on a combination of several tools, including the use
of a Sherali-Adams style linear programming relaxation.

Spectral methods based on Cheeger's inequality yield constant factor
approximations for conductance and expansion when the value is
large. Recent work has addressed spectral methods for conductance and
expansion in directed graphs as well. We refer the reader to a recent
paper \cite{Lau23} and the manuscripts of Spielman and Lau and others
on spectral graph theory.

In the context of directed graphs, an important line of work on
Multicut and Sparsest Cut is for the setting of ``symmetric'' demands.
Here the input is a directed graph $G=(V,E)$ and the pairs are sets
$\{s_1,t_1\},\ldots,\{s_k,t_k\}$. Separating a pair corresponds to
ensuring that $s_i$ and $t_i$ are not in the same strongly connected
component. A canonical problem in this setting is the Feedback Arcset
problem, where the goal is to remove a min-cost subset of the given
graph to make it acyclic. Polylogarithmic approximations and flow-cut
gaps are known for symmetric demand instances in directed graphs.  We
refer the reader to \cite{KPRT97,ENRS}. Feedback arcset and
Feedback vertex set in planar digraphs admit an 
$O(1)$-approximation \cite{BermanY,Sun24}.

We already mentioned that Multicut and Sparsest Cut in directed graphs
are hard to approximate to almost polynomial factors
\cite{ChuzhoyK}. Prior to this hardness result, there were several
results that obtained
polynomial-factor upper bounds \cite{CheriyanKR05,Gupta03,
  AgarwalAC07}. There have also been some positive results via
quasi-metric embeddings for special classes of graphs \cite{MemoliSS18,salmasi_constant_2019}.
Multicut in node-weighted planar graphs is at least as
hard as Vertex Cover in undirected graphs --- this implies
APX-Hardness and hardness of $(2-\eps)$ under UGC. As far as we know,
no super-constant factor hardness is known for Multicut in planar
digraphs and no APX-hardness is known for the Sparsest Cut problem.

\section{Preliminaries}
\label{sec:prelim}

\paragraph{Node-weighted Multicut in planar digraphs and the LP relaxation.}
We are given as input a directed planar 
graph $G = (V, E)$, along with a cost function on the vertices 
$c: V \to \R_{\geq 0}$ and terminal pairs $\{(s_i, t_i)\}_{i \in [k]}$.\footnote{As 
mentioned earlier, we can assume
all terminals have infinite cost to ensure they cannot be included in $S$.}
The goal is to find a minimum cost subset $S \subseteq V$ 
such that 
all terminal pairs are separated in $G \setminus S$; i.e. for all $i \in [k]$,
there is no $s_i$-$t_i$ path in $G \setminus S$.
We describe an LP relaxation for this problem. For each node $v \in V$, we have a variable 
$x_v$ denoting whether $v \in S$ or not. For each terminal pair, we let 
$\calP_i$ denote the set of all $s_i$-$t_i$ paths in $G$. 

\begin{align*}
  \min \sum_{v \in V} c(v) x_v& \\
  \text{subject to } \sum_{v \in p} x_v &\geq 1 &\forall p \in \calP_i, \forall i \in [k] \\
  x_v &\geq 0 &\forall v \in V
\end{align*}

Note that although this LP relaxation has exponentially many constraints, it 
can be solved efficiently via the ellipsoid method. 

\paragraph{Notation.} 
Let $G = (V, E)$ be a directed graph with node costs $c(v)$ and node lengths 
$x_v$. For any subgraph $H \subseteq G$, we let $V(H)$ and $E(H)$ denote the vertex 
and edge sets of $H$ respectively. For any path $p \subseteq G$, the \emph{length}
of $p$ is $\sum_{v \in p} x_v$. 
We let $d_x$ denote the distance
function given by the lengths $x$; that is, 
$d_x(u,v) = \min_{uv \text{ paths } p} \sum_{w \in p} x_w$. 
We drop $x$ when it is clear from context.
For any $v \in V$ and radius $r > 0$, we define the \emph{out-ball} centered at $v$ 
as $B^+(v, r) := \{u: d(v, u) \leq r\}$ and the \emph{in-ball} centered at $v$ as 
$B^-(v, r) := \{u: d(u,v) \leq r\}$. For a set $S$, we define the 
\emph{out-boundary}
as $\Gamma^+(S) = \{u \notin S: \exists v \in S \text{ s.t. } (v,u) \in E\}$.
Similarly, the \emph{in-boundary} of $S$ is 
$\Gamma^-(S) = \{u \notin S: \exists v \in S \text{ s.t. }
(u,v) \in E\}$. 
We define the \emph{volume} of a set as $\vol(S) = \sum_{v \in S} c(v)x_v$.
For any $i \geq 1$, we let $[i]$ denote the set $\{1, \dots, i\}$.

For any path $P = u_0, \dots, u_\ell$, we let $P[u_i, u_j]$ denote the 
subpath from $u_i$ to $u_j$. We call a path \emph{nontrivial} if $|P| \geq 2$.
For two nodes $u_i, u_j$ on a path $P$, we say 
$u_i \leq_{P} u_j$ if $u_i = u_j$ or $u_i$ appears before $u_j$ in $P$
(that is, $i \leq j$). For two paths $P, P'$, we say $P$ and $P'$ intersect 
if they share a common node, i.e. $V(P) \cap V(P') \neq \emptyset$.

\paragraph{Deterministic Region Growing.}
\label{sec:gvy_region}
A key tool used in this paper is a deterministic region growing lemma 
of Garg, Vazirani, and Yannakakis \cite{GVY}. Although this lemma 
was originally developed for edge-costs, it is well known that it extends 
to node-costs and directed graphs as well \cite{GVY-node,ENRS,CheriyanKR05}. For completeness, 
we provide a full proof in Appendix \ref{sec:gvy_proof}.

\begin{restatable}{lemma}{gvy}
\label{lem:gvy_region}
  Let $G = (V, E)$ be a directed graph with node costs $c(v)$ 
  and node lengths $x_v$.
  Suppose $x_u \leq \frac{\delta}{6\log n}$ for all $u \in V$.
  Fix $v \in V$ and a parameter $\delta > 0$. 
  There exists a radius $r \in [\frac\delta{6\log n},\delta)$ such that 
  \[c(\Gamma^+(B^+(v, r))) \leq O\left(\frac {\log n}{\delta}\right) 
  [\vol(B^+(v, r)) + \vol(V)/n].\] 
  Furthermore, there is an efficient deterministic algorithm to find $r$. 
  The same holds for the in-ball case.
\end{restatable}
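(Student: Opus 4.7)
The plan is to prove existence of $r$ by contradiction and then obtain the algorithm by enumerating the $O(n)$ distinct values of $r$ where the ball or out-boundary changes. Set $\tau := \delta/(6\log n)$, $V_0 := \vol(V)/n$, and $\mu(r) := \vol(B^+(v, r)) + V_0$, and aim to establish the bound with $\alpha := 12 \log n / \delta$, chosen so that $\alpha \tau = 2$. Suppose toward contradiction that $c(\Gamma^+(B^+(v, r))) > \alpha \mu(r)$ for every $r \in [\tau, \delta)$.

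Two structural observations drive the argument. First, since $x_u \leq \tau$, any $u \in \Gamma^+(B^+(v, r))$ has $d(v, u) \leq r + x_u \leq r + \tau$, so $\Gamma^+(B^+(v,r)) \subseteq B^+(v, r+\tau)$. Second, for each node $u$, writing $d^-(v, u) := \min_{(w, u) \in E} d(v, w)$, the set of radii at which $u$ belongs to the out-boundary is exactly the interval $[d^-(v, u), d(v, u))$, which has length at most $x_u$ since $d(v, u) \leq d^-(v, u) + x_u$. Integrating the boundary cost over a window $[i\tau, (i+1)\tau) \subseteq [\tau, \delta)$ and combining these two facts gives
\[
\int_{i\tau}^{(i+1)\tau} c(\Gamma^+(B^+(v, r)))\, dr \;\leq\; \sum_{u \in B^+(v, (i+2)\tau) \setminus B^+(v, i\tau)} c(u)\, x_u \;=\; \mu((i+2)\tau) - \mu(i\tau),
\]
while monotonicity of $\mu$ and the contradiction hypothesis lower-bound the same integral by $\alpha \tau \mu(i\tau) = 2 \mu(i\tau)$. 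Hence $\mu((i+2)\tau) > 3 \mu(i\tau)$.

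Iterating this recurrence over $i = 1, 3, 5, \ldots, 6\log n - 3$ yields $\mu((6\log n - 1)\tau) > 3^{3\log n - 1}\, V_0$, which far exceeds the trivial upper bound $\mu(r) \leq \vol(V) + V_0 = (n+1) V_0$, giving the required contradiction. For the algorithmic part, compute all distances $d(v, \cdot)$ and $d^-(v, \cdot)$ via a node-weighted Dijkstra; then $\vol(B^+(v, r))$ and $c(\Gamma^+(B^+(v, r)))$ are piecewise constant with at most $O(n)$ breakpoints in $[\tau, \delta)$, so we simply test the inequality at each breakpoint and return any $r$ that satisfies it (guaranteed to exist by the contradiction argument). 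The in-ball case is symmetric, via the graph with edges reversed.

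The main obstacle is the mismatch between $c(\Gamma^+)$ (raw node cost) and $\vol$ (cost times length): unlike the edge-weighted GVY setup, there is no direct identification of boundary cost with the derivative of $\vol$ with respect to $r$, and so the pointwise ODE argument does not apply. The integration trick above resolves this by converting a boundary node's cost multiplied by its sojourn-time in the boundary (at most $x_u$) into its exact contribution $c(u)\,x_u$ to volume growth over a $\tau$-window. The hypothesis $x_u \leq \delta/(6\log n)$ is precisely what ensures that every boundary node at radius $r$ is absorbed into the ball by radius $r + \tau$, enabling the discretization to work.
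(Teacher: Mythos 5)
Your proof is correct and takes essentially the same approach as the paper's. The paper first locates a window $[r_i, r_{i+1})$ where the (shifted) volume at most doubles and then averages the boundary cost over a uniformly random radius in that window, whereas you phrase the identical double-counting as a contradiction via an integral; both rest on the same two observations (the boundary at radius $r$ is absorbed by radius $r + \delta/(6\log n)$, and a node's sojourn time in the boundary equals $d(v,u)-d^-(v,u)\leq x_u$) and the same bound that the volume cannot keep tripling $\Theta(\log n)$ times, and both derandomize by enumerating the $O(n)$ breakpoints.
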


We refer to the algorithm to obtain such a ball as $\RegionGrow^+(v, \delta)$ 
in the out-ball case and $\RegionGrow^-(v, \delta)$ in the in-ball case.
\section{The LP Rounding Algorithm}
\label{sec:algo}

We prove \Cref{thm:main} via a 
deterministic LP rounding algorithm. Given a feasible LP solution 
$\{x_v\}$, the algorithm consists of three main steps, 
following the same high level approach as \cite{KS22} with some 
modifications. 
We view $x_v$ as node lengths and aim to cut all long paths.
We first preprocess the solution to remove all $v$ with 
$x_v \geq \frac 1 {c \log n}$, where $c$ is a fixed constant. 
We introduce a parameter $\delta > 0$ below which we treat as 
a fixed constant. The exact values for $c, \delta$ will be specified 
later in the analysis.

\begin{definition}
  We say a digraph $G = (V, E)$ with distance function $d$ 
  is \emph{$\delta$-in-bounded} (similarly, 
  \emph{$\delta$-out-bounded}) if there exists a root $v \in V$ such 
  that for all $u \in V$, $d(u,v) \leq \delta$ (or, in the out-bounded 
  case, $d(v, u) \leq \delta)$. 
  We simply write $G$ is $\delta$-bounded if it is $\delta$-in-bounded 
  or $\delta$-out-bounded.
\end{definition}

\paragraph{Layering decomposition:} We partition
the graph $G$ into layers that are each $\delta$-bounded.
Furthermore, we find a low-cost set $S' \subseteq V$ such that all 
paths in $G \setminus S'$ are contained in at most two consecutive 
layers. This allows us to handle each layer independently.
Note that this decomposition preserves planarity, since each 
layer is a minor of $G$. However, the approach does 
not rely on planarity and in fact holds for general digraphs.
This is summarized in \Cref{lem:layering_main}.

\paragraph{Divide-and-conquer via planar separators:} 
For each layer $G_i$, since $G_i$ is $\delta$-bounded, there is a root $r_i$ 
that can reach (or can be reached by) all of $G_i$ within distance $\delta$.
This allows us to apply the planar separator lemma of Thorup 
\cite{Thorup} (see \Cref{lem:thorup_undir}) to obtain a separator consisting 
of three directed paths, each of length at most $\delta$. 
We cut regions around each of these paths, as specified in the following 
step, and recurse on each weakly connected component.

\paragraph{Cutting regions around paths:} We cut regions around each 
path in the planar separator. Specifically, for each separator path 
$P$, we obtain a set $S_P$ such that all long paths intersecting $P$ are 
cut by $S_P$; see \Cref{lem:one_path_main} for a formal statement.
As with the first step, this argument is not specific 
to planar graphs and would extend generally to any digraph.

\subsection{Constructing Layers}
\label{sec:layering}

In this section, we prove the following decomposition lemma.

\begin{lemma}
\label{lem:layering_main}
  Let $G = (V, E)$ be a weakly connected digraph with node costs $c(v)$ and 
  node lengths $x_v$. Let $\delta > 0$ be a given parameter 
  such that $x_v \leq \frac \delta {6 \log n}$ for all $v \in V$.
  There exists an efficient deterministic algorithm that decomposes $G$ 
  into $G_0, \dots, G_t$ for some $t \leq 2n$ such that:
  \begin{enumerate}
    \item Each $G_i$ is a $\delta$-bounded minor of $G$;
    \item $\sum_{i} \vol(V(G_i)) \leq \vol(V)$.
  \end{enumerate}
  Furthermore, this algorithm finds a set $S \subseteq V$ such that 
  $c(S) \leq O\left(\frac {\log n}{\delta}\right) \vol(V)$, and 
  for any path $P$ in $G \setminus S$, $\exists i \in \{0, \dots, t-1\}$ 
  such that $P$ can be decomposed into $P_0 \circ P_1$
  with $P_0 \subseteq G_i$ and $P_1 \subseteq G_{i+1}$.
\end{lemma}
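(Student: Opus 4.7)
The plan is to iteratively carve out $\delta$-bounded layers using \Cref{lem:gvy_region}. I start by picking an arbitrary vertex $v_0 \in V$ and running $\RegionGrow^+(v_0, \delta)$ to obtain an out-ball $B_0 = B^+(v_0, r_0)$ with $r_0 \in [\delta/(6\log n), \delta)$. I set $G_0 = G[B_0]$, which is $\delta$-out-bounded from $v_0$ by construction, and add $\Gamma^+(B_0)$ to $S$. To set up the next iteration, I delete $B_0$ from $G$ and contract $\Gamma^+(B_0)$ into a single supernode $v_1$---a minor operation that preserves the minor-of-$G$ property and provides a well-defined root for the next layer.

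Iterating: at step $i$, I apply $\RegionGrow^+(v_i, \delta)$ in the current minor $G^{(i)}$ to obtain $B_i$, set $G_i = G^{(i)}[B_i]$ as a $\delta$-out-bounded minor of $G$, add $\Gamma^+_{G^{(i)}}(B_i)$ to $S$, and form $G^{(i+1)}$ by deletion and contraction. Each iteration removes at least one original vertex, so the process terminates after $t \le 2n$ iterations (with the factor of $2$ absorbing the extra layers introduced below). The disjointness of original vertices across the layers gives $\sum_i \vol(V(G_i)) \le \vol(V)$, and summing the region growing guarantee gives $c(S) \le \sum_i O(\log n/\delta)[\vol(B_i) + \vol(V)/n] = O(\log n/\delta)\,\vol(V)$, using $\sum_i \vol(B_i) \le \vol(V)$ and $t \le 2n$.

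The main obstacle is the consecutive-layer property: every path $P$ in $G \setminus S$ must decompose as $P_0 \circ P_1$ with $P_0 \subseteq G_i$ and $P_1 \subseteq G_{i+1}$ for some $i$. The forward direction is immediate: if $u \in B_i$ has an out-edge in $G^{(i)}$, it either stays in $B_i$ or enters $\Gamma^+_i \subseteq S$, so by induction $P$ cannot skip forward past a single layer without crossing $S$. The subtle case is backward edges $(u,w)$ with $u \in B_i$ and $w \in B_j$ for $j < i$---such an edge lives in $G$ but is invisible in $G^{(i)}$ since $B_j$ was already removed at step $j$, so it is not automatically cut by out-boundaries alone. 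I would address this by augmenting $S$ with in-boundaries at each step: run $\RegionGrow^-(v_i, \delta)$ (the in-ball variant of \Cref{lem:gvy_region}) in parallel with the out-ball growth, and add the resulting in-boundary to $S$ with a symmetric cost bound. With bidirectional boundaries in $S$, every edge crossing any ball's boundary is cut, so any path in $G \setminus S$ is confined to a single $B_i$ (or to the out-ball/in-ball pair around a common root, which are consecutive in the indexing), giving the required decomposition. The doubling in the layer count is what produces the $2n$ bound on $t$.
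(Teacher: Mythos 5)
Your base construction already departs from the paper's: you grow only out-balls and contract the out-boundary into the next root, whereas the paper alternates the \emph{direction} of the layers themselves (even layers are out-balls around the contracted union of everything covered so far; odd layers are in-balls around that same contracted core) and contracts the covered region, not its boundary, into the new root. You correctly identify backward edges as the obstacle, but the patch you propose --- growing an in-ball $B_i^-$ around the same root $v_i$ ``in parallel'' and cutting its in-boundary --- does not close the gap. The sets $B_i^+$ and $B_i^-$ are in general incomparable. Cutting $\Gamma^-(B_i^-)$ only stops paths from entering $B_i^-$; it says nothing about an edge $(u,w)$ from a much later layer into a vertex $w \in B_i^+ \setminus B_i^-$. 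Such an edge is not caught by $\Gamma^+(B_i^+)$ (wrong direction), not caught by $\Gamma^-(B_i^-)$ (since $w\notin B_i^-$), and not caught by any boundary computed at a later step $j>i$ (since $w$ was deleted at step $i$, so the edge is invisible in $G^{(j)}$). Thus the key assertion ``with bidirectional boundaries in $S$, every edge crossing any ball's boundary is cut'' is false, and a surviving path can hop from a late out-ball directly back into an early one, breaking the two-consecutive-layer decomposition.

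The paper's fix is to make the incoming shell a \emph{layer}, not merely an extra boundary. After the out-layer $L_i$ is carved and contracted, the next layer $L_{i+1}$ is the in-ball around the contracted $\{v_0\}\cup L_0\cup\cdots\cup L_i$, so it absorbs exactly the not-yet-covered nodes that are one hop from reaching $L_i$. Cutting $\Gamma^-(L_{i+1})$ then blocks all in-edges into $L_{i+1}$, and since any edge into $L_i$ from a later layer would have to land in $L_{i+1}$ or its in-boundary, this also guards the entrance to $L_i$. This is exactly \Cref{claim:layering_separation}: in $G\setminus S$, even (out) layers cannot reach outside themselves and odd (in) layers cannot be reached from outside themselves, which together trap any surviving path in a consecutive pair $L_i, L_{i+1}$. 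Without alternating the growth direction, no amount of additional boundary-cutting around a fixed root recovers that bidirectional separation. A secondary issue: contracting only $\Gamma^+(B_i)$ (rather than the covered region) into the new root can stall when $\Gamma^+(B_i)=\emptyset$ but weakly-connected uncovered vertices remain; the paper sidesteps this by contracting $L_i$ into $v_i$ and alternating direction, which is also where the $t\le 2n$ count (\Cref{claim:layering_iterations}) comes from. Your cost bound and the disjointness argument for $\sum_i \vol(V(G_i))\le\vol(V)$ are fine; the gap is entirely in the path-confinement argument.
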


\paragraph{Overview:} We follow the high-level approach of a 
decomposition given by Thorup \cite{Thorup}, which is as follows:
let $v_0 \in V$ be an arbitrary root, and
let $L_0$ denote the set of all $u \in V$ that are reachable from $v_0$ 
within some distance $\delta$. For $i \geq 1$,
\begin{itemize}[nosep]
  \item If $i$ is \emph{odd}, let $L_i$ denote the set of all
  $u \in V \setminus \cup_{j < i} L_j$ that \emph{can reach} $\cup_{j < i} L_j$
  in distance $\delta$;
  \item If $i$ is \emph{even}, let $L_i$ denote the set of all 
  $u \in V \setminus \cup_{j < i} L_j$ that \emph{are reachable from} 
  $\cup_{j < i} L_i$ in distance $\delta$.
\end{itemize}
This decomposition has a few nice properties. First, each layer is 
$\delta$-bounded. Second, we can remove the 
boundaries of these layers in a way that all remaining paths in $G$ are confined 
to a small number of consecutive layers.
In order to use this decomposition while 
ensuring that the cost of the removed 
boundary edges remained low, \cite{KS22} designed a randomized version of this decomposition
by adding a random shift at each layer. 

We instead use the deterministic region 
growing argument discussed in \Cref{sec:gvy_region}. The layer construction 
is formally described in \Cref{algo:layering}, see
\Cref{fig:layers}. 

\begin{figure}
  \centering
  \includegraphics[width=\linewidth]{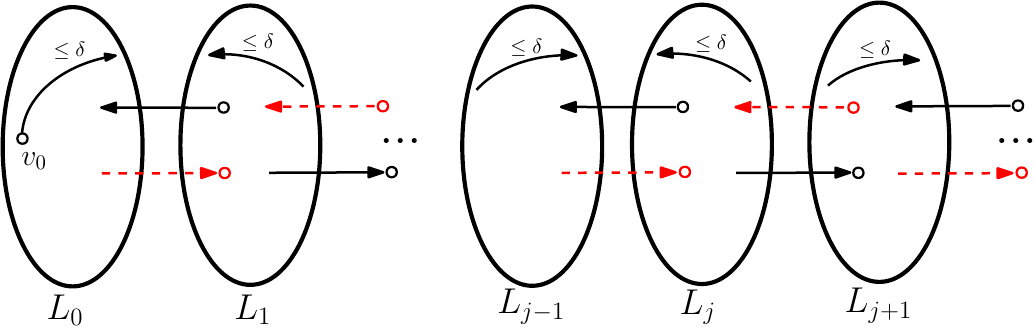}
  \caption{Example of consecutive layers constructed by \Cref{algo:layering},
  here $j$ is odd.
  Some boundary edges are shown with corresponding boundary nodes. The red 
  boundary nodes are the ones included in $S$; thus the corresponding incident 
  edges (shown with dashed red lines) are removed in $G \setminus S$. 
  }
  \label{fig:layers}
\end{figure}

\begin{algorithm}[H]
\caption{Constructing Layers in $(G, c, x), \delta$}
\label{algo:layering}
  \begin{algorithmic}
    \State $v_0 \gets$ arbitrary root in $V$ 
    \State $i \gets 0$
    \While{$|V(G)| \geq 1$}
      \If{$i$ is even}
        \State $L_i \gets \RegionGrow^+(v_i, \delta) \setminus \{v_i\}$
        \State $S_i \gets \Gamma^+(L_i)$
      \Else 
        \State $L_i \gets \RegionGrow^-(v_i, \delta) \setminus \{v_i\}$
        \State $S_i \gets \Gamma^-(L_i)$
      \EndIf
    \State Contract $L_i$ into $v_{i}$ to get $v_{i+1}$, with 
    $c(v_{i+1}) = x_{v_{i+1}} = 0$
    \State $i \gets i+1$
    \EndWhile\\
    \Return $L_0, \dots, L_{i}$, and $S := \cup_j S_j$
  \end{algorithmic}
\end{algorithm}

\begin{claim}
\label{claim:layering_iterations}
  Algorithm \ref{algo:layering} terminates  and at termination $i \leq 2n$.
\end{claim}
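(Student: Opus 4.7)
The plan is to bound the number of iterations by showing that two consecutive iterations must contract at least one new vertex, unless the graph has already collapsed to a single super-node. Since each iteration reduces $|V(G)|$ by exactly $|L_i|$, termination and the $2n$ bound reduce to controlling how often $L_i$ can be empty.

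The first step is to leverage the hypothesis $x_v \leq \delta/(6\log n)$ together with the region-growing guarantee $r \in [\delta/(6\log n), \delta)$ from \Cref{lem:gvy_region}. In an even iteration, any out-neighbor $u$ of $v_i$ in the current graph satisfies $d(v_i, u) \leq x_u \leq \delta/(6\log n) \leq r$, and so $u \in B^+(v_i, r)$. Consequently $L_i = B^+(v_i, r) \setminus \{v_i\}$ can only be empty if $v_i$ has no out-neighbors at all; an analogous statement holds for in-neighbors in the odd case.

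Next I would combine the even and odd cases. If $L_i = \emptyset$ in an even iteration, then no contraction happens, so $v_{i+1} = v_i$, and the graph entering iteration $i+1$ is identical. If additionally $L_{i+1} = \emptyset$ in the ensuing odd iteration, then by the observation above $v_i$ has neither out- nor in-neighbors. But contraction preserves weak connectivity, and the input $G$ was weakly connected, so the only way $v_i$ can be isolated is if $V(G) = \{v_i\}$. Hence every pair of consecutive iterations either strictly decreases $|V(G)|$ or reaches the state where the graph is a single super-node.

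Starting from $|V(G)| = n$ and decreasing by at least $1$ every two iterations until only one vertex remains gives at most $2(n-1) \leq 2n$ iterations before termination. The main subtlety I anticipate is the boundary behavior of the while-loop: one has to verify that the algorithm does in fact stop at the single-vertex state (equivalently, observe that once $V(G) = \{v_i\}$ both $L_i$ and $L_{i+1}$ are empty and no further progress or work is performed, so we can halt). Apart from this bookkeeping point, the argument is essentially a direct consequence of the lower bound on the region-growing radius.
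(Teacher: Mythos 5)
Your proof is correct and follows essentially the same approach as the paper: both arguments use the lower bound $r \geq \delta/(6\log n)$ on the region-growing radius to show that any (out- or in-) neighbor of the current root lands in the ball, and then invoke weak connectivity to conclude that in every pair of consecutive (even, odd) iterations at least one original vertex is contracted away, giving the $2n$ bound. The paper phrases this directly (a non-trivial current graph forces $L_{i-1}\cup L_i \neq \emptyset$), whereas you take the contrapositive ($L_i = L_{i+1} = \emptyset$ would isolate $v_i$), but the content is identical; your explicit note about the while-loop's stopping condition when only the super-node remains is a reasonable and helpful observation that the paper handles only implicitly.
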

\begin{proof}
  We argue that for all $i$, if the algorithm has not terminated, then 
  there exists at least one node in 
  $(L_{i} \cup L_{i-1}) \setminus L_{i-2}$. This suffices to prove the claim. 
  Consider iteration $i-1$. Since the algorithm has yet to terminate,
  there must be some node $u \in V(G)$ such that $u \neq v_{i-1}$. 
  Since we assume $G$ is weakly connected, $u$ must be reachable from 
  $v_{i-1}$ or vice versa. Suppose $u$ is reachable from $v_{i-1}$; the 
  other case is analogous. Let $P$ be a path from $v_{i-1}$ to $u$, and let 
  $u'$ denote the first node after $v_{i-1}$ on $P$. Then,
  $(v_{i-1}, u') \in E$, so 
  $d(v_{i-1}, u') = x_{u'} \leq \delta/(6\log n)$. If $i-1$ is even, 
  then $u' \in L_{i-1}$; this is because 
  $\RegionGrow(v_{i-1}, \delta)$ returns an out-ball of radius at least 
  $\delta/(6\log n)$. Else, $i$ must be even and 
  since $v_{i-1}$ is contracted into $v_i$,
  $u' \in L_i$. In either case, $u' \in (L_{i} \cup L_{i-1}) \setminus 
  L_{i-2}$.
\end{proof}

\begin{lemma}
\label{lem:layering_cap} 
  $c(S) \leq O\left(\frac {\log n}{\delta}\right) \vol(V)$.
\end{lemma}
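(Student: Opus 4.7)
The plan is to apply the deterministic region growing lemma (\Cref{lem:gvy_region}) at each iteration of \Cref{algo:layering} and sum the resulting boundary bounds. At iteration $i$, the lemma produces a radius $r_i$ such that the ball $B_i := B^{\pm}(v_i, r_i)$ (with sign matching the parity of $i$) satisfies $c(S_i) \leq O(\log n / \delta)\,[\vol(B_i) + \vol(V)/n]$. I intend to apply this bound with $n$ fixed to the original $|V(G)|$ throughout; this is consistent with the hypothesis $x_v \leq \delta/(6\log n)$ of the present lemma (stated with the original $n$) and remains valid after contractions because supernodes carry zero length and cost.

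The argument then needs two ingredients. First, a disjointness observation: at the end of iteration $i$ the non-supernode vertices in $L_i$ are absorbed into the supernode $v_{i+1}$ and hence cannot reappear in any later $L_j$. Since supernodes carry zero cost and length, $\vol(B_i) \leq \vol(L_i) + \vol(v_0)\cdot\mathbf{1}[i=0]$, and the original-vertex parts of the $L_i$ are pairwise disjoint subsets of $V$. Thus $\sum_i \vol(B_i) \leq \vol(V) + \vol(v_0) \leq 2\,\vol(V)$. Second, by \Cref{claim:layering_iterations} there are at most $2n+1$ iterations, so the slack terms contribute at most $(2n+1)\cdot\vol(V)/n = O(\vol(V))$ in total.

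Adding the per-iteration bounds yields
\[
c(S) = \sum_i c(S_i) \leq O\!\left(\tfrac{\log n}{\delta}\right)\!\left[\sum_i \vol(B_i) + (2n+1)\tfrac{\vol(V)}{n}\right] = O\!\left(\tfrac{\log n}{\delta}\right) \vol(V),
\]
as claimed.

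The main subtlety I expect is keeping the ``$n$'' in the region growing lemma pinned to the original graph size rather than the size of the contracted graph at iteration $i$. A naive application to the current contracted graph would use a smaller $n_i$, making the slack $\vol(V)/n_i$ potentially large and causing $\sum_i 1/n_i$ to grow like $\log n$, which would cost an extra logarithmic factor and give only $O(\log^2 n/\delta)\vol(V)$. The hypothesis $x_v \leq \delta/(6\log n)$ in the statement of \Cref{lem:layering_main} is exactly what is needed to run the doubling argument underlying \Cref{lem:gvy_region} with respect to the original $n$, so the per-iteration slack is only $\vol(V)/n$ and the summation closes cleanly.
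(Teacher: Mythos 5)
Your proof is correct and follows essentially the same approach as the paper: apply \Cref{lem:gvy_region} per iteration, sum the ball-volume terms using disjointness of the $L_i$'s, and bound the slack terms using the $O(n)$ iteration count from \Cref{claim:layering_iterations}. You are somewhat more explicit than the paper about two minor points — that the region growing lemma must be invoked with $n$ pinned to the original $|V|$ (licensed by the hypothesis $x_v \leq \delta/(6\log n)$), and that $\vol(B_i)$ equals $\vol(L_i)$ except at $i=0$ — which the paper glosses over but which do not change the argument.
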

\begin{proof}
  By the guarantees given by \Cref{lem:gvy_region},
  $c(S) = \sum_{i} c(S_i) \leq \sum_{i} O\left(\frac {\log n}{\delta}\right) 
  [\vol(L_i) + \vol(V)/n] = O\left(\frac {\log n}{\delta}\right) 
  [\sum_i \vol(L_i) + \sum_i \vol(V)/n]$.
  We note here that $V(G)$ changes throughout 
  the course of the algorithm, but it is always decreasing as contracted 
  nodes $v_i$ add no volume. Thus, this is a valid upper bound. 

  For all $u \in V$, if $u \in L_i$, then $u$ is contracted 
  into $v_{i+1}$ and thus does not contribute any volume to $L_j$ for $j > i$. 
  Therefore, $\sum_i \vol(L_i) \leq \vol(V)$. Furthermore, by 
  \Cref{claim:layering_iterations}, the total number of layers is at 
  most $2n$. 
  Therefore,
  $O\left(\frac {\log n}{\delta}\right) 
  [\sum_i \vol(L_i) + \sum_i \vol(V)/n] \leq 
  O\left(\frac {\log n}{\delta}\right)[3 \cdot \vol(V)]$.
\end{proof}

\begin{claim}
\label{claim:layering_separation}
In $G \setminus S$, if $i$ is even, then
no node in $L_i$ can reach $(V \setminus L_i)$. If $i$ is odd, then no node 
in $L_i$ can be reached by $(V \setminus L_i)$. 
\end{claim}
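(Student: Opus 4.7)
The plan is to argue by contradiction. Fix an even $i$, and suppose that in $G \setminus S$ there is a path $P$ from some $u \in L_i$ to some vertex outside $L_i$. Let $(v, v')$ be the first edge of $P$ with $v \in L_i$ and $v' \notin L_i$, and let $i'$ denote the layer containing $v'$ (treating $v' = v_0$ as $i' = 0$ for this analysis). I proceed by case analysis on $i'$. If $i' > i$, then $v'$ is still an uncontracted original vertex at iteration $i$, so the edge $(v, v')$ forces $v' \in \Gamma^+(L_i) = S_i \subseteq S$, contradicting $v' \in V \setminus S$. If $i' < i$ with $i'$ odd, then $v$ is still uncontracted at iteration $i'$, and the same edge forces $v \in \Gamma^-(L_{i'}) = S_{i'} \subseteq S$, again a contradiction.

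The subtle case is $i' < i$ with $i'$ even (and $v' = v_0$ is handled analogously by examining iteration $1$). Here I look at iteration $i'+1$, which is odd. At that iteration $v'$ has already been merged into the zero-weight super-node $v_{i'+1}$, so in the current contracted graph the edge $(v, v')$ appears as $(v, v_{i'+1})$, giving $d(v, v_{i'+1}) \leq x_v \leq \delta/(6 \log n)$. Hence $v$ lies in the in-ball returned by $\RegionGrow^-(v_{i'+1}, \delta)$, placing $v \in L_{i'+1}$. Since $i$ is even but $i'+1$ is odd, $i \neq i'+1$, and the disjointness of layers contradicts $v \in L_i$. The odd-$i$ case is completely symmetric, exchanging $\Gamma^+ \leftrightarrow \Gamma^-$ and out-balls $\leftrightarrow$ in-balls throughout.

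The main bookkeeping subtlety, which is the only real obstacle, is tracking the iteration at which each original vertex is absorbed into its super-node: the single original edge $(v, v')$ corresponds to different edges in different contracted graphs, and one has to pick the iteration at which exactly one endpoint has just been absorbed, so that either the relation $S_j = \Gamma^\pm(L_j)$ applies to the surviving endpoint, or the region growing guarantee forces the surviving endpoint into a layer of the wrong parity.
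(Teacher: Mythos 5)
Your proof is correct and follows the same overall strategy as the paper's: locate the offending edge $(v,v')$ and derive a contradiction from the boundary and absorption structure of the layering. The case analysis, however, is organized differently for $v' \in L_{i'}$ with $i' < i$. The paper jumps straight to iteration $i-1$: by that point $v'$ is either in $L_{i-1}$ or already contracted into $v_{i-1}$, and in either event $v$ lands in $S_{i-1}$ (or in $L_{i-1}$ itself when the edge points at the super-node --- also a contradiction, though the paper's wording glosses over that subcase). You instead branch on the parity of $i'$: when $i'$ is odd you argue at iteration $i'$ directly and obtain $v \in \Gamma^-(L_{i'}) = S_{i'}$; when $i'$ is even you argue at iteration $i'+1$, using the radius lower bound of $\RegionGrow^-$ to force $v$ into $L_{i'+1}$, contradicting $v \in L_i$ by disjointness of layers together with parity. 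Both arguments are sound, and your even case is arguably the more transparent one: it isolates the fact the paper uses implicitly, namely that every in-neighbor of a zero-length super-node lies within distance $\delta/(6\log n)$ and is therefore absorbed into the next in-ball layer. One small shared gap worth noting: the root $v_0$ belongs to no $L_j$, so both case analyses silently skip $v' = v_0$ with $i = 0$; the fix is immediate (there $v_0 \in \Gamma^+(L_0) = S_0$), and your convention of ``treating $v' = v_0$ as $i' = 0$'' already handles the remaining $i > 0$ subcase.
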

\begin{proof}
  Let $i$ be even. Suppose for the sake of contradiction that there 
  exists a path in $G \setminus S$ from $L_i$ to $V \setminus L_i$. 
  Then, there must be some edge $(u,w)$ such that $u \in L_i$, $w \notin L_i$. 
  First, suppose $w \in L_j$ for $j < i$. Then either $w \in L_{i-1}$ 
  or $w$ was contracted into $v_{i-1}$. Thus 
  $u \in \Gamma^-(L_{i-1}) \subseteq S$ since $i-1$ is odd. Else, 
  $w \in L_j$ for $j > i$. Then since $u \in L_i$, 
  $w \in \Gamma^+(L_i) \subseteq S$ since $i$ is even. In either case, 
  this path contains a node in $S$, a contradiction. 

  Next, let $i$ be odd. Suppose for the sake of contradiction that there 
  exists a path in $G \setminus S$ from $V \setminus L_i$ to $L_i$. 
  Then, there must be some edge $(u,w)$ such that $u \notin L_i$, $w \in L_i$. 
  First, suppose $u \in L_j$ for $j < i$. Then either $u \in L_{i-1}$ 
  or $u$ was contracted into $v_{i-1}$. Thus 
  $w \in \Gamma^+(L_{i-1}) \subseteq S$ since $i-1$ is even. Else, 
  $u \in L_j$ for $j > i$. Then since $w \in L_i$, 
  $u \in \Gamma^-(L_i) \subseteq S$ since $i$ is odd. In either case, 
  this path contains a node in $S$, a contradiction.  
\end{proof}

\begin{lemma}
\label{lem:layering_paths}
  Let $P$ be any path in $G \setminus S$. Then, there exists some $i \geq 0$ 
  such that $P \subseteq L_i \cup L_{i+1}$. Furthermore, $P$ can be decomposed 
  into two subpaths $P_{i} \subseteq L_i$ and
  $P_{i+1} \subseteq L_{i+1}$.\footnote{This argument differs slightly from 
  that of \cite{KS22}.
  \cite{KS22} argues that every path of length at most $\delta$ is 
  contained in at most 3 layers. We generalize this to \emph{every} path 
  in order to reduce complications in later parts of the rounding scheme.}
\end{lemma}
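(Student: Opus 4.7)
My plan is to analyze paths in $G \setminus S$ by tracking which layer each vertex belongs to, combining the parity-based separation of \Cref{claim:layering_separation} with the lower bound $r \geq \delta/(6\log n)$ on the region-growing radius guaranteed by \Cref{lem:gvy_region}.

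Fix a path $P = u_1, \ldots, u_k$ in $G \setminus S$ and let $f(j)$ denote the index of the layer containing $u_j$. Consider any layer-crossing edge $(u_j, u_{j+1})$ with $f(j) \neq f(j+1)$. First I would apply \Cref{claim:layering_separation} to pin down the parities: if $f(j)$ were even then $u_j$ could not reach outside $L_{f(j)}$ in $G \setminus S$, and if $f(j+1)$ were odd then $u_{j+1}$ could not be reached from outside $L_{f(j+1)}$. Hence $f(j)$ is odd and $f(j+1)$ is even. Because no vertex can leave an even layer in $G \setminus S$, once the path makes a transition it is trapped in the new layer, so $P$ contains at most one transition and visits at most two distinct layers.

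The main technical step is to show that these two layers have consecutive indices, and this is the one place where the quantitative lower bound on the region-growing radius is actually used. Write the transition edge as $(u,w)$ with $u \in L_a$ (odd) and $w \in L_b$ (even). Suppose first that $a < b$: after iteration $a$, the vertex $u$ has been contracted into $v_{a+1}$, so in the iteration-$(a{+}1)$ graph the edge $(u,w)$ has become $(v_{a+1}, w)$, which is a $v_{a+1}$-to-$w$ path of length $x_{v_{a+1}} + x_w = x_w \leq \delta/(6\log n)$ since $x_{v_{a+1}} = 0$ by the algorithm. Because $a+1$ is even, $L_{a+1} = B^+(v_{a+1}, r_{a+1}) \setminus \{v_{a+1}\}$ with $r_{a+1} \geq \delta/(6\log n)$, so $w$ must lie in $L_{a+1}$, and disjointness of the layers forces $b = a+1$. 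The case $b < a$ is symmetric: $w$ has been contracted into $v_{b+1}$ by iteration $b+1$, the edge becomes $(u, v_{b+1})$ of length $x_u \leq \delta/(6\log n)$, and since $b+1$ is odd, $L_{b+1}$ is an in-ball of radius at least $\delta/(6\log n)$ around $v_{b+1}$, forcing $u \in L_{b+1}$ and hence $a = b+1$.

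Putting these pieces together, either $P$ lies entirely in a single layer $L_j$ (take $i = j$ and $P_{i+1}$ empty), or $P$ has exactly one transition between an odd layer $L_a$ containing the prefix and an adjacent even layer $L_{a \pm 1}$ containing the suffix; choosing $i = \min\{a, a\pm 1\}$ then gives $P \subseteq L_i \cup L_{i+1}$, and the prefix/suffix split at the transition index supplies the required subpaths $P_i \subseteq L_i$ and $P_{i+1} \subseteq L_{i+1}$. I expect the consecutive-layer claim to be the most delicate point, since the separation claim alone only yields ``one odd and one even layer''; what closes the gap is the interaction between the preprocessing bound $x_v \leq \delta/(6\log n)$, the fact that each contracted super-node carries zero length (so the transformed edge length is still at most $\delta/(6\log n)$), and the region-growing radius lower bound $r \geq \delta/(6\log n)$ that forces the neighbor into the very next ball.
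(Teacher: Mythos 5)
Your proof is correct and rests on the same two pillars as the paper's: \Cref{claim:layering_separation} to pin down the parities (and in particular that even layers are absorbing in $G \setminus S$, so there is at most one transition), and the lower bound $r \geq \delta/(6\log n)$ on the region-growing radius combined with the preprocessing bound $x_v \leq \delta/(6\log n)$ to force any layer-crossing edge to land in the adjacent layer. The organization differs slightly: the paper fixes the minimum index $i$ with $L_i \cap P \neq \emptyset$ and then cases on the parity of $i$, which automatically places the second layer at index $i+1$; you instead analyze an arbitrary layer-crossing edge $(u,w)$ with $u \in L_a$, $w \in L_b$ and must handle both $a < b$ and $b < a$ to conclude $|a - b| = 1$. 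Your version surfaces the clean structural fact that every layer-crossing goes from an odd layer to an even one, which the paper leaves implicit, at the cost of one extra case in the consecutiveness argument. Both are valid and use the same underlying facts, so this is a reorganization of the same proof rather than a different route.
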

\begin{proof}
  We denote the nodes on $P$ as $u_0, \dots, u_\ell$.
  Let $i$ be the minimum index such that $L_i \cap P \neq \emptyset$, and let 
  $u_j$ be some node in $L_i \cap P$. 
  
  Suppose $i$ is odd. Then, by \Cref{claim:layering_separation}, $P[u_0, u_j]$
  must be contained in $L_i$, since $u_j$ can be reached by all nodes in
  $P[u_0, u_j]$. If $P[u_j,u_\ell] \subseteq L_i$, we are done. Else, 
  let $j'$ be the first index such that $u_{j'} \notin L_i$. 
  Note that $d(L_i, u_{j'}) = x_{u_{j'}} \leq \delta/(6 \log n)$. 
  By construction, $L_{i+1}$ contains all nodes reachable within 
  $\delta/(6 \log n)$ from $L_i$. Thus $u_{j'} \in L_{i+1}$. By 
  \Cref{claim:layering_separation}, $u_{j'}$ cannot reach any node 
  in $V \setminus L_{i+1}$. Thus $P[u_{j'}, u_{\ell}] \subseteq L_{i+1}$.
  In this case, $P_i = P[u_0,u_{j'-1}]$ and $P_{i+1} = P[u_{j'}, u_{\ell}]$.
  
  The case where $i$ is even is similar. 
  By \Cref{claim:layering_separation}, $P[u_j, u_\ell]$
  must be contained in $L_i$.
  If $P[u_0, u_j] \subseteq L_i$, we are done. Else, 
  let $j'$ be the largest index such that $u_{j'} \notin L_i$. 
  Then $d(u_{j'}, L_i) \leq x_{u_{j'}} \leq \delta/(6 \log n)$. 
  By construction, $L_{i+1}$ contains all nodes that can reach $L_i$
  within distance $\delta/(6 \log n)$. Thus $u_{j'} \in L_{i+1}$, and by 
  \Cref{claim:layering_separation}, 
  $P[u_0, u_{j'}] \subseteq L_{i+1}$.
  In this case, $P_{i} = P[u_{j'-1}, u_{\ell}]$ and $P_{i+1} = P[u_0,u_{j'}]$.
\end{proof}

\begin{proof}[Proof of \Cref{lem:layering_main}]
  We call \Cref{algo:layering} to obtain $S$ and layers 
  $L_i$ for $i = 0, \dots, t$. By \Cref{claim:layering_iterations},
  $t \leq 2n$.
  Since \RegionGrow~is deterministic and efficient, 
  \Cref{algo:layering} is as well. 
  We define $G_0$ as $L_0$ and 
  for each $i \in [t-1]$, we define $G_i$ as the minor of $G$ 
  obtained by contracting $\cup_{j < i} L_j$ into a new node $v_i$
  \footnote{Note that $v_i$ here corresponds exactly to $v_i$ as defined 
  in \Cref{algo:layering}.}
  and deleting $\cup_{j > i} L_j$. We set $c(v_i) = x_{v_i} = 0$ for all 
  $i$. 
  By construction, each $G_i$ is a minor of $G$ and is $\delta$-bounded 
  with root $v_i$. Furthermore, since the layers $L_i$ are disjoint, 
  $\sum_i \vol(V(G_i)) = 
  \sum_i \sum_{v \in L_i} c(v)x_v = \sum_{v \in V} c(v)x_v = \vol(V)$.
  The properties of $S$ guaranteed by \Cref{lem:layering_main} are given 
  directly by \Cref{lem:layering_cap} and \Cref{lem:layering_paths}.
\end{proof}
\subsection{Planar Separators}
\label{sec:separator}

In this section, we further decompose each ``layer''
into a collection of paths and their corresponding neighborhoods.
We prove the following lemma.

\begin{lemma}
\label{lem:separator_main}
  Let $G = (V, E)$ be a $\delta$-bounded planar digraph with node costs $c(v)$ and 
  node lengths $x_v$ such that $x_v \leq \frac \delta {6 \log n}$ for all $v \in V$.
  Then, there exists an 
  efficient deterministic algorithm to find a set $S \subseteq V$
  such that $c(S) \leq O(\log^2n) \sum_{v \in V} c(v)x_v$,
  and for all $u, v \in V$, if $u$ can reach $v$ in $G \setminus S$, 
  then $d_x(u,v) \leq 3\delta$.
\end{lemma}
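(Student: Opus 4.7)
The plan is to prove the lemma by strong induction on $n=|V|$, with Thorup's planar separator lemma (\Cref{lem:thorup_undir}) as the divide step and the one-path cutting lemma (\Cref{lem:one_path_main}) as the conquer step. Without loss of generality assume $G$ is $\delta$-out-bounded with root $r$ (otherwise reverse all edges and invoke the symmetric forms of the tools). First I would build a shortest-path out-tree $T$ rooted at $r$, so that every root-to-leaf tree path is a directed path in $G$ of $x$-length at most $\delta$. Thorup's lemma applied to the undirected underlying graph together with $T$ then yields three such root-paths $P_1,P_2,P_3$ whose removal leaves weakly connected components each of size at most $n/2$.

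Next, apply \Cref{lem:one_path_main} to each directed path $P_j$ to obtain a set $S_{P_j}$ of cost at most $O(\log n)\cdot\vol(V)$ that cuts all long paths meeting $P_j$, and set $S_{\textnormal{sep}}=\bigcup_j S_{P_j}$. For each weakly connected component $C$ of $G\setminus S_{\textnormal{sep}}$, recursively apply the lemma and take the union of the resulting sets to form $S_{\textnormal{rec}}$; the final output is $S=S_{\textnormal{sep}}\cup S_{\textnormal{rec}}$. Correctness would come from a case analysis on any $u\to v$ path $P$ in $G\setminus S$: if $P$ meets some $P_j$, then the guarantee of \Cref{lem:one_path_main} already forces the shortest such path to have length at most $3\delta$ (accounting for $\delta$ to enter $P_j$, $\delta$ along $P_j$, and $\delta$ to exit); if $P$ avoids all three $P_j$'s, then $P$ lies in a single recursive component and induction supplies the $3\delta$ bound. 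For the cost, the components at each recursion level have $\vol$ summing to at most $\vol(V)$ by disjointness, each level pays $O(\log n)\cdot\vol(V)$ via the one-path lemma, and the recursion depth is $O(\log n)$ by balanced separation, giving $c(S)=O(\log^2 n)\vol(V)$ overall.

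The main obstacle I anticipate is certifying that every recursive component $C$ remains $\delta$-bounded so that the inductive hypothesis applies. The natural root $r$ has been removed together with the separator, yet every $v\in C$ was originally at $x$-distance at most $\delta$ from $r$ in $G$, and the shortest $r$-$v$ path must last enter $C$ through some boundary vertex adjacent to a removed separator node. I would handle this by attaching a virtual root $r_C$ to $C$ via zero-cost, zero-length edges to exactly these boundary entry points; this keeps $C$ a minor of $G$ (hence planar), because the boundary of $C$ lies on a single face once the separator is removed, and makes $C$ $\delta$-out-bounded by construction. Executing this re-rooting carefully so that the one-path cutting guarantees interface correctly with the recursive statement, and so that the virtual-root trick composes cleanly across recursion levels without inflating $n$ or $\vol$, is the delicate step of the argument.
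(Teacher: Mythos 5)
Your overall plan — Thorup's separator on a shortest-path tree, the one-path cutting lemma on each separator path, balanced recursion with depth $O(\log n)$ — is exactly the paper's strategy, and your cost accounting and $3\delta$ case split are correct in spirit. However there is a concrete bug in the divide step: you recurse on the weakly connected components of $G\setminus S_{\textnormal{sep}}$, but this does not give balanced separation. The one-path lemma returns only \emph{boundary} vertices of region-growing balls around the paths $P_j$; it does not return (or even approximately cover) the path vertices themselves, and could in fact return the empty set if the LP mass there is zero. Thorup's size guarantee of $n/2$ per component is about removing the paths $P_1\cup P_2\cup P_3$, so the recursion must be on the weakly connected components of $G\setminus(P_1\cup P_2\cup P_3)$ (with the separator handled by contraction or re-rooting), not of $G\setminus S_{\textnormal{sep}}$. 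As written, your recursion depth is not $O(\log n)$, and the correctness case ``if $Q$ avoids all $P_j$'' no longer aligns with the recursive subproblems.

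On the re-rooting issue you flag as delicate: the paper avoids the virtual-root-plus-zero-length-edges construction by simply \emph{contracting} the whole separator $P_1\cup P_2\cup P_3$ into the old root $r$, giving it cost and length zero, and recursing on $C\cup\{r\}$ for each component $C$ of $G\setminus P$. Contraction is a minor operation, so planarity is automatic (no face/topology argument needed), and $\delta$-boundedness from $r$ is preserved because contraction can only shorten distances from $r$. Your virtual-root variant would likely work but requires justifying that the added edges embed planarly, which is exactly what contraction gives for free. There is also a small interface detail your sketch elides (and the paper addresses in a remark): the contracted root is not a vertex of the original $G$, so it must be excluded from the sets returned by the one-path cutting calls at deeper recursion levels; this is harmless because a long path meeting the separator only at a contracted root must already meet an earlier-level separator at a real vertex.
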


The main tool used in this section is a result on shortest path separators in 
planar digraphs. This is the key property of planarity used in this 
work.

\begin{lemma}[\cite{Thorup}]
\label{lem:thorup_undir}
  Let $H$ be an \emph{undirected} planar graph with a spanning tree 
  $T$ rooted at some node $r \in V(H)$. Then, there is an efficient 
  deterministic algorithm that finds $v_1, v_2, v_3 \in V(H)$ such that the 
  following holds: let $P_i$ be the unique $r$-$v_i$ path for $i \in [3]$. Then 
  each connected component of $H \setminus (P_1 \cup P_2 \cup P_3)$ has 
  at most half the vertices of $H$. 
\end{lemma}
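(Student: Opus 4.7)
The plan is to follow the classical fundamental-cycle-separator approach of Lipton and Tarjan, specialized to the spanning-tree-rooted form required here. The first step would be a reduction to the case where $H$ is triangulated: I would add edges to $H$ consistent with its planar embedding so that every face becomes a triangle, while keeping the spanning tree $T$ intact, obtaining a graph $H'$. The desired separator property is preserved under edge addition, since $H$ is a subgraph of $H'$ and removing the same vertex set from $H$ can only further break components apart. So it suffices to prove the lemma for triangulated $H'$.

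The central structural tool is planar duality: in any planar graph with spanning tree $T$, the non-tree edges form a spanning tree $T^*$ of the planar dual $H'^*$, whose vertices correspond to faces of $H'$. For each non-tree edge $e = (u,v)$, the fundamental cycle $C_e$ consists of $e$ together with the unique tree path between $u$ and $v$ in $T$. That tree path is contained in $P_u \cup P_v$, the union of the root-paths to the two endpoints; consequently $P_u \cup P_v$ already contains every vertex of $C_e$ and therefore separates the interior of $C_e$ from the exterior in $H'$. Thus each fundamental cycle can be enacted as a vertex separator using only two root-paths.

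The main step is a centroid-style argument on the dual tree $T^*$. I would root $T^*$ at an arbitrary face and, for each non-tree edge $e$, compute $n(e)$, the number of vertices of $H'$ lying strictly inside $C_e$. Starting from the root face, I would walk along the dual edge leading into the heaviest subtree (the child subtree with the largest vertex count), stopping at the first face $f^*$ whose remaining subtree size drops below $n/2$. Because $H'$ is triangulated, $f^*$ has exactly three boundary vertices $v_1, v_2, v_3$; the root-paths $P_{v_1}, P_{v_2}, P_{v_3}$ contain the entire union of the fundamental cycles associated with $f^*$'s non-tree boundary edges, and the choice of $f^*$ as a centroid ensures that each of the resulting regions of $H' \setminus (P_1 \cup P_2 \cup P_3)$ has at most $n/2$ vertices.

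The main technical obstacle will be making the three-region bookkeeping precise when one or more boundary edges of $f^*$ are tree edges rather than non-tree edges: in that case the corresponding ``side'' of $f^*$ is absorbed into an adjacent region through $T$, and I must verify that the centroid-walk invariant still yields the uniform $n/2$ bound across all surviving components. Once this accounting is done, the algorithmic implementation is routine: constructing the triangulation, identifying the co-tree $T^*$, computing the subtree vertex counts, and performing the centroid walk are all deterministic linear-time operations, which together yield the efficient deterministic algorithm promised by the lemma.
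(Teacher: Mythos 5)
You should first note that the paper does not prove this lemma at all: it is imported as a black box from \cite{Thorup} (and, as the paper remarks, is implicit in Lipton--Tarjan), so the comparison here is with the known proof rather than an in-paper one. Your outline follows that known route: triangulate while keeping $T$ a spanning tree, use the co-tree $T^*$ of non-tree edges in the dual, observe that each fundamental cycle $C_e$ has all its vertices on $P_u \cup P_v$ for the endpoints $u,v$ of $e$, and choose a single ``centroid'' face whose three corner root-paths confine every surviving component to one of the regions hanging off that face. These structural ingredients are correct, and the obstacle you flag about tree edges on the boundary of $f^*$ is a non-issue: a triangle cannot have all three edges in $T$, and a tree edge of $f^*$ simply contributes no region, since both of its endpoints are corners of $f^*$ and hence deleted.

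The genuine gap is in the centroid rule itself. As stated --- descend into the heaviest child subtree and stop at the \emph{first} face $f^*$ whose subtree count drops below $n/2$ --- the rule controls only the regions hanging \emph{below} $f^*$; it says nothing about the region across the dual edge from $f^*$ to its parent, which contains all $n - (\text{subtree of } f^*) - |V(C_e)|$ remaining vertices and exceeds $n/2$ as soon as the subtree count falls noticeably below $n/2$ (for instance when the $\geq n/2$ weight at the parent is split between two children of roughly equal size). Nothing in your argument shows that the three root-paths of $f^*$'s corners cut that outer region any further, so the asserted $n/2$ bound on \emph{all} surviving components does not follow from the stated stopping rule. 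The fix is small and standard, but it is the crux of the lemma: the selection criterion must be two-sided. Either stop at the \emph{last} face $g$ whose subtree count is still $\geq n/2$, at the moment when every child of $g$ has subtree count $< n/2$ (then each child side is $< n/2$, and the parent side of $g$ has at most $n - n/2 \leq n/2$ vertices), or equivalently run a non-backtracking walk in $T^*$ that crosses any incident dual edge whose far side contains more than $n/2$ vertices; crossing such an edge leaves fewer than $n/2$ vertices behind, so the walk never reverses and must terminate at a face all of whose incident far sides have at most $n/2$ vertices. Combined with the (worth stating explicitly) observation that every vertex not on the three root-paths lies strictly inside exactly one of the at most three far regions of the chosen face, so no component can straddle two regions, this corrected selection yields the lemma and is essentially Thorup's argument.
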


By applying the above lemma on a shortest-path out-tree of any planar 
digraph, we immediately obtain the following corollary.

\begin{corollary}
\label{cor:thorup_dir}
  Let $H$ be a planar digraph with root $r \in V(H)$ such that $r$ can reach 
  all nodes in $V(H)$. 
  Then, there is an efficient deterministic 
  algorithm that finds three shortest paths $P_1, P_2, P_3$ starting at $r$
  such that each weakly connected component of 
  $H \setminus (P_1 \cup P_2 \cup P_3)$ has at most half the vertices of $H$. 
\end{corollary}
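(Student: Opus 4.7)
The plan is to derive \Cref{cor:thorup_dir} as an immediate consequence of the undirected version \Cref{lem:thorup_undir}, using a shortest-path out-tree of $H$ from $r$ as the spanning tree to which that lemma is applied. Since $r$ reaches every node of $H$ by hypothesis, I would first compute, efficiently and deterministically, a shortest-path out-tree $T$ rooted at $r$ (e.g., by Dijkstra's algorithm with the given non-negative lengths). Viewing $T$ as an undirected subgraph of the underlying undirected graph $H^u$, it is a spanning tree of $H^u$. Applying \Cref{lem:thorup_undir} to $H^u$ with spanning tree $T$ rooted at $r$ then yields three vertices $v_1, v_2, v_3$ and the three unique undirected tree paths $P_1, P_2, P_3$ from $r$ to each $v_i$, with the guarantee that every connected component of $H^u \setminus (P_1 \cup P_2 \cup P_3)$ has at most $|V(H)|/2$ vertices.

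Two easy observations finish the argument. First, each $P_i$, with its edges oriented from parent to child in $T$, is a directed $r$-to-$v_i$ path in $H$, and because $T$ is a shortest-path out-tree, this path has length exactly $d_H(r, v_i)$ and is therefore a shortest $r$-to-$v_i$ path in $H$. Second, the connected components of $H^u \setminus (P_1 \cup P_2 \cup P_3)$ coincide with the weakly connected components of the directed graph $H \setminus (P_1 \cup P_2 \cup P_3)$, since weak connectivity in a digraph is by definition connectivity of its underlying undirected graph, and removing the vertex set $V(P_1) \cup V(P_2) \cup V(P_3)$ commutes with passing to the underlying undirected graph.

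There is no real obstacle; the corollary is essentially a repackaging of \Cref{lem:thorup_undir} for the directed setting. The only ingredient needed beyond the undirected separator lemma is an efficient deterministic shortest-path out-tree computation, which is classical, together with the trivial checks that the orientations from $T$ make the tree paths into directed paths and that weak connectivity is preserved.
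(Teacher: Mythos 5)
Your proposal is correct and matches the paper's (one-line) proof: both derive the corollary by applying \Cref{lem:thorup_undir} to a shortest-path out-tree of $H$ rooted at $r$, observing that the resulting tree paths are directed shortest paths and that weak connectivity of a digraph is connectivity of its underlying undirected graph. Your write-up simply spells out the details the paper leaves implicit.
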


Note that we can obtain the same result for three shortest paths \emph{ending}
at $r$ by considering a shortest-path in-tree instead, or by 
reversing the directions of all edges and directly applying 
\Cref{cor:thorup_dir}.

\paragraph{Overview:} We use the planar separator to follow 
a recursive process. Consider a $\delta$-bounded planar digraph $G$. 
Since $G$ is $\delta$-bounded (we assume out-bounded for this 
description), there is a root $r$ that can reach all of $G$ within distance 
$\delta$. Let $P_1, P_2, P_3$ the separator given by \Cref{cor:thorup_dir}
on $G, r$, and note that each $P_i$ has length at most $\delta$.
Ideally, we would like to remove these paths and recurse on the remaining 
weakly connected components; since each component has at most half the vertices,
$G$ would be fully separated in $O(\log n)$ iterations. However, although these 
separators have short length, they may have high cost. Instead, we show that 
we can find low-cost cuts \emph{surrounding} each separator. In \Cref{sec:one_path} 
we describe the process of cutting regions around a separator, and in 
\Cref{sec:separator_proof} we formally describe this recursive process to 
prove \Cref{lem:separator_main}.

\subsubsection{Cutting Regions around Path}
\label{sec:one_path}

In this section, we cut small-diameter regions around a given 
path using the region growing algorithm discussed in \Cref{sec:gvy_region}.
We prove the following lemma.

\begin{lemma}
\label{lem:one_path_main}
Let $G = (V, E)$ be a digraph with node costs $c(v)$ and node lengths 
$x_v$. Let $P$ be a shortest path of $G$ of length at most $\delta > 0$. 
Suppose for all $v \in V$, $x_v \leq \frac {\delta}{6 \log n}$. 
Then, there exists an efficient deterministic algorithm to find a set 
$S \subseteq V$ such that 
$c(S) \leq O\left(\frac {\log n}{\delta}\right) \vol(V)$, and 
for all $u,v \in V$, if there exists some $u$-$v$ path $Q$ in 
$G \setminus S$ such that $Q$ intersects $P$, then $d(u,v) \leq 
3 \delta$.
\end{lemma}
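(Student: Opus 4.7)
The plan is to build $S$ by covering $P$ with a small number of \emph{out-balls} (grown by a forward sweep from $p_0$) and \emph{in-balls} (grown by a backward sweep from $p_\ell$), using \Cref{lem:gvy_region} as the workhorse and taking $S$ to be the union of the out-boundaries of the out-balls and the in-boundaries of the in-balls.

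Concretely, for the forward sweep I start at $w_0 = p_0$, and at iteration $k$ call $\RegionGrow^+(w_k, \delta)$ on the current (possibly contracted) graph to obtain a radius $r_k \in [\delta/(6 \log n), \delta)$ and ball $B^+(w_k, r_k)$; I add $\Gamma^+(B^+(w_k, r_k))$ to $S$, contract $B^+(w_k, r_k)$ into a fresh zero-cost zero-length vertex so that successive out-balls are disjoint, and advance $w_{k+1}$ to the first vertex of $P$ not yet covered. The backward sweep is symmetric, using $\RegionGrow^-$ from $p_\ell$ and adding $\Gamma^-$-boundaries to $S$. Because each radius is at least $\delta/(6\log n)$ and the total length of $P$ is at most $\delta$, each sweep terminates in $O(\log n)$ iterations. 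Summing the bound of \Cref{lem:gvy_region} over the iterations of both sweeps, using disjointness (via contraction) to bound $\sum_i \vol(B_i) \le \vol(V)$ and noting that the $O(\log n)$ additive $\vol(V)/n$ terms contribute only a lower-order amount, yields $c(S) = O(\log n/\delta)\,\vol(V)$ as required.

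For correctness, consider any $u$-$v$ path $Q$ in $G \setminus S$ that intersects $P$ at some $w$, and split $Q = Q_1 \circ Q_2$ at $w$. By the forward-sweep invariant, $w$ belongs to some $B^+(w_i, r_i)$ with $w_i \le_P w$; since $\Gamma^+(B^+(w_i, r_i)) \subseteq S$, the suffix $Q_2$ cannot leave the ball, so $v \in B^+(w_i, r_i)$ and $d(w_i, v) \le \delta$. Symmetrically, from the in-ball sweep $u$ lies in some $B^-(w'_j, r'_j)$ with $w'_j \ge_P w$, giving $d(u, w'_j) \le \delta$. Since $P$ is a shortest path and $w_i \le_P w \le_P w'_j$, the subpath $P[w_i, w'_j]$ has length at most $\delta$, so $d(w_i, w'_j) \le \delta$. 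Combining these three short distances through $P$ gives $d(u,v) \le 3\delta$.

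The main obstacle I anticipate is exactly this last step: in a directed graph the three bounded sub-distances $d(u, w'_j)$, $d(w_i, w'_j)$ along $P$, and $d(w_i, v)$ do not chain into a single $u \to v$ walk in a naive way, because $w_i$ lies before $w'_j$ on $P$ and so the middle piece runs in the ``wrong'' direction for the obvious concatenation. Resolving this will require carefully assembling the walk using the shortest-path structure of $P$ together with the containment $Q_2 \subseteq B^+(w_i, r_i)$ and $Q_1 \subseteq B^-(w'_j, r'_j)$; this is where the proof has its technical heart, and is also where any slack in the constant $3$ (versus $2$) will come from.
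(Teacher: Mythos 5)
You have the right high-level plan --- cover $P$ with out-balls and in-balls via \Cref{lem:gvy_region} and put the boundaries in $S$ --- but the orientation of your two sweeps is reversed from what the argument needs, and this is not a technicality you can ``carefully assemble'' around; it makes the final chaining step fail outright.

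Concretely, you grow out-balls starting from $p_0$ and advancing forward, and in-balls starting from $p_\ell$ and advancing backward. The sequential cutting guarantee you inherit from the sweep is directional: once $B_i$ is cut, a path in $G \setminus S$ can leave $B_i^+$ only into $\cup_{j \le i} B_j^+$, and can enter $B_i^-$ only from $\cup_{j \le i} B_j^-$. With your forward out-sweep, lower index means \emph{earlier} on $P$; with your backward in-sweep, lower index means \emph{later} on $P$. Now take a $u$-$v$ path $Q$ in $G \setminus S$ hitting $P$ at $w_0$. Applying the reachability guarantee to the in-balls gives that the center $c^-$ of the in-ball containing $u$ satisfies $c^- \ge_P w_0$ (a late point on $P$) with $d(u, c^-) \le \delta$; applying it to the out-balls gives that the center $c^+$ of the out-ball containing $v$ satisfies $c^+ \le_P w_0$ (an early point on $P$) with $d(c^+, v) \le \delta$. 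You therefore need to traverse $P$ from $c^-$ to $c^+$, i.e., \emph{against} the direction of $P$, and no directed subpath of $P$ provides this. You identify exactly this obstacle at the end of your write-up, but it is not a detail to be patched; it is a sign that the sweeps are pointing the wrong way. The fix is to swap: grow \emph{out}-balls starting at $p_\ell$ and advance backward, and grow \emph{in}-balls starting at $p_0$ and advance forward. Then the in-ball center containing $u$ is $\le_P$ the out-ball center containing $v$, and the three legs $u \to c^-$, $c^- \to c^+$ along $P$, and $c^+ \to v$ concatenate into a genuine directed $u$-$v$ walk of length at most $3\delta$. This is the construction the paper uses, and the paper's remark after the proof of the cutting lemma is devoted precisely to explaining why this choice of sweep direction is forced.

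Two smaller issues. First, your assertion ``the suffix $Q_2$ cannot leave the ball, so $v \in B^+(w_i, r_i)$'' is too strong: $Q_2$ may leave $B^+(w_i, r_i)$ through edges into \emph{earlier} balls of the same sweep, since those edges are never added to $S$; the correct conclusion is only $v \in \cup_{j \le i} B_j^+$. This is harmless once the sweep direction is fixed, because the ordering of centers still goes the right way, but as written the step is incorrect. Second, the paper tracks both the first intersection $a$ and the last intersection $b$ of $Q$ with $P$, while you collapse these to a single $w$; using a single intersection point is in fact sufficient once the sweeps are oriented correctly, so that difference is cosmetic. The cost bound portion of your proposal (disjointness via contraction/removal, at most $n$ balls per sweep, summing the \Cref{lem:gvy_region} guarantee) is fine and matches the paper.
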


Let $G = (V, E)$ be such a digraph with costs $c(v)$, lengths $x_v$, and 
a shortest path $P$.
We denote the nodes on $P$ as $v_0, \dots, v_\ell$. 
We define \PathCutting($G, P, \delta$) as follows:
\begin{itemize}[left=0pt]
  \item \textbf{Cutting out-balls:} We cut out-balls around $P$ starting at 
  the \emph{end} of the path. Let $\curr = \ell$, set 
  $B_1^+ := \RegionGrow^+(v_\curr, \delta)$ and $S_1^+ := \Gamma^+(B_1^+)$.
  We then remove $B_1^+$ and recurse on $G \setminus B_1^+$. At each iteration 
  $i$, given out-balls $B_1^+, \dots, B_{i-1}^+$, set 
  $\curr = \argmax_{j < \curr} \{v_j \notin B_{i-1}^+\}$. That is, 
  $v_{\curr}$ is the furthest remaining node on $P$ not contained in any 
  out-ball. Set $B_i^+ := \RegionGrow^+(v_\curr, \delta)$ and 
  $S_i^+ := \Gamma^+(B_i^+)$, and remove $B_i^+$. 
  Note that $B_i^+$ and $S_i^+$ are constructed with respect to 
  $G \setminus \cup_{j < i} B_j^+$ and are thus disjoint from all $B_j^+$ for 
  $j < i$. The process terminates when all nodes in $P$ are contained in 
  an out-ball. See \Cref{fig:out_balls}. 
  \begin{figure}
    \centering
    \includegraphics[width=\linewidth]{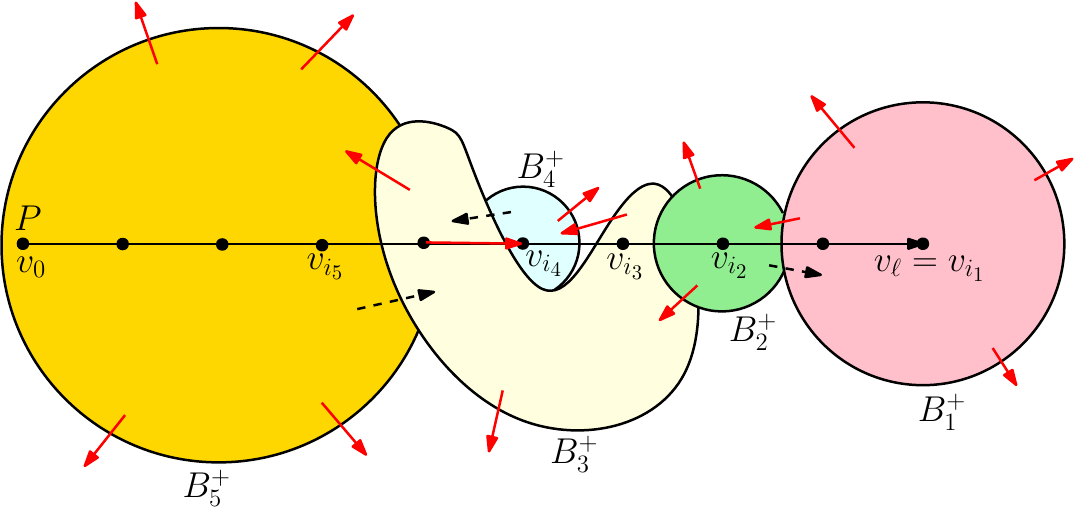}
    \caption{Example of out-ball construction. The centers $v_{\curr}$ of 
    the ball $B_j^+$ are denoted $v_{i_j}$. Note that the balls do not 
    necessarily cover contiguous portions of the path $P$, as shown by 
    $B_3^+$ and $B_4^+$. The boundary edges are depicted in red; boundary 
    vertices $S_i^+$ are not shown for simplicity. We highlight the fact that 
    edges from $B_j^+$ to $B_i^+$ (given by dashed black edges)
    for $j > i$ are not necessarily removed 
    in this process.}
    \label{fig:out_balls}
  \end{figure}
  \item \textbf{Cutting in-balls:} We follow a similar process as above, starting 
  instead from the \emph{beginning} of the path. Let $\curr = 0$ and let 
  $B_1^- := \RegionGrow^-(v_\curr, \delta)$ and $S_1^- := \Gamma^-(B_1^-)$. 
  We remove $B_1^-$ and recurse on $G \setminus B_1^-$. 
  At each iteration 
  $i$, given in-balls $B_1^-, \dots, B_{i-1}^-$, set 
  $\curr = \argmin_{j > \curr} \{v_j \notin B_{i-1}^-\}$ so
  $v_{\curr}$ is the closest node on $P$ not contained in any 
  in-ball. Set $B_i^- := \RegionGrow^-(v_\curr, \delta)$ and 
  $S_i^- := \Gamma^-(B_i^-)$. Note that $B_i^-$ and $S_i^-$ are with respect to 
  $G \setminus \cup_{j < i} B_j^-$ and are thus disjoint from all $B_j^-$ for 
  $j < i$. The process terminates when all nodes in $P$ are contained in 
  an in-ball.
  \item Return $S = \bigcup_i S_i^+ \cup \bigcup_i S_i^-$
\end{itemize}

We first bound the cost of $S$. 

\begin{lemma}
\label{lem:one_path_cost}
  Let $S$ be given by \PathCutting($G, P, \delta$). 
  Then $c(S) \leq O\left(\frac {\log n}{\delta}\right) \vol(V)$.
\end{lemma}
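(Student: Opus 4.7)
The plan is to apply the GVY region growing lemma (Lemma \ref{lem:gvy_region}) to each ball constructed in \PathCutting~and then sum the resulting bounds, exploiting the fact that the balls in each of the two phases are pairwise disjoint.

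More concretely, I would first handle the out-ball phase. Let $B_1^+, \ldots, B_{m^+}^+$ be the out-balls produced by \PathCutting, with boundaries $S_1^+, \ldots, S_{m^+}^+$. By construction $B_i^+$ is obtained by calling $\RegionGrow^+$ in the graph $G \setminus \bigcup_{j < i} B_j^+$, which still satisfies the hypothesis $x_u \le \delta/(6 \log n)$ since node lengths are unchanged. Applying \Cref{lem:gvy_region} in each such subgraph yields
\[
c(S_i^+) \;\le\; O\!\left(\frac{\log n}{\delta}\right)\Bigl[\vol(B_i^+) + \vol(V)/n\Bigr].
\]
The balls $B_i^+$ are disjoint, so $\sum_i \vol(B_i^+) \le \vol(V)$.

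Next I would bound the number of iterations $m^+$. The center $v_{\curr}$ chosen at iteration $i$ is a node of $P$ that has not been absorbed by any previous out-ball, and it lies in $B_i^+$ (since $\RegionGrow^+$ returns a ball of positive radius around $v_{\curr}$). Hence each iteration permanently covers at least one new node of $P$, giving $m^+ \le |V(P)| \le n$. Summing the two terms,
\[
\sum_{i=1}^{m^+} c(S_i^+) \;\le\; O\!\left(\frac{\log n}{\delta}\right)\Bigl[\vol(V) + m^+\cdot \vol(V)/n\Bigr] \;\le\; O\!\left(\frac{\log n}{\delta}\right)\vol(V).
\]
The in-ball phase is symmetric: the balls $B_i^-$ are disjoint, each iteration covers a new node of $P$, so $m^- \le n$, and the identical computation gives $\sum_i c(S_i^-) \le O(\log n/\delta)\vol(V)$. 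Adding the two contributions yields $c(S) \le O(\log n/\delta)\vol(V)$, as required.

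There is no real obstacle here, but the one point that requires a moment of care is that the two phases operate on overlapping regions: the in-balls are cut in the original graph $G$, not in $G \setminus \bigcup_j B_j^+$. This is not a problem for the cost bound, since we are only summing $c(S_i^+) + c(S_i^-)$ and the disjointness argument is applied separately within each phase. A minor check is also needed that the $\log n$ arising from the GVY lemma in a subgraph of size $n' \le n$ is still $O(\log n)$, which is immediate by monotonicity.
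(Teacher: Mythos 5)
Your proof is correct and follows essentially the same approach as the paper: apply \Cref{lem:gvy_region} to each ball, use disjointness of the balls within each phase to bound the volume sum, and bound the number of iterations by $n$ since each ball absorbs a new node of $P$. The extra remarks about unchanged node lengths and the two phases overlapping are valid but not strictly needed.
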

\begin{proof}
  We first bound the cost of $\cup_i S_i^+$. By \Cref{lem:gvy_region},
  $c(S_i^+) \leq O\left(\frac {\log n}{\delta}\right) 
  [\vol(B_i^+) + \vol(V)/n]$. 
  By construction, the $B_i^+$ are all disjoint; once we construct a ball, we 
  remove it from $G$, thus it does not get used in future balls or boundaries. 
  Therefore, 
  $\sum_i \vol(B_i^+) \leq \vol(V)$. Furthermore, the total number of 
  out-balls is at most $n$, since each out-ball covers at least one 
  new node on $P$. Thus 
  $c(\cup_i S_i^+) = \sum_i c(S_i^+) = O\left(\frac {\log n}{\delta}\right)  
  \vol(V)$ as desired. A similar argument shows that 
  $c(\cup_i S_i^-) = O\left(\frac {\log n}{\delta}\right)  
  \vol(V)$, giving us our desired bound on $c(S)$. 
\end{proof}

Next, we prove that $S$ cuts long paths that intersect with $P$. This, 
along with \Cref{lem:one_path_cost}, suffices to prove 
\Cref{lem:one_path_main}.

\begin{claim}
\label{claim:one_path_reachability}
  Let $\{B_i^+\}_i$ and $\{B_i^-\}_i$ denote the out-balls and in-balls
  constructed in \PathCutting($G, P, \delta$). Fix $i$, and suppose 
  $u \in B_i^-$. Then for all $w \in V$, if $u$ is reachable from 
  $w$ in $G \setminus S$, then $w \in \cup_{j \leq i} B_{j}^-$. 
  Similarly, if $u \in B_i^+$, then for all $w \in V$, if $u$ 
  can reach $w$ in $G \setminus S$, then $w \in \cup_{j \leq i} B_j^+$.
\end{claim}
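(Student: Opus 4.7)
The plan is to prove the in-ball statement directly by contradiction, then note that the out-ball case is symmetric --- it follows by reversing all edge directions and swapping $B^+/B^-$ and $\Gamma^+/\Gamma^-$ throughout. So I would assume toward a contradiction that $u \in B_i^-$ is reachable in $G \setminus S$ from some $w \notin \bigcup_{j \leq i} B_j^-$ via a path $Q = w_0, w_1, \ldots, w_k$ with $w_0 = w$ and $w_k = u$.

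The key idea is to locate a vertex of $Q$ that must lie in $S$, using how the in-boundary is defined in the algorithm. I would let $m$ be the largest index along $Q$ with $w_m \notin \bigcup_{j \leq i} B_j^-$; this is well-defined because $w_0$ is such an index and $w_k \in B_i^-$ is not, so $m < k$. Then $w_{m+1}$ lies in some $B_{j^*}^-$ with $j^* \leq i$. Next I would invoke the structural fact, noted in the description of \PathCutting, that $B_{j^*}^-$ and its boundary $S_{j^*}^- = \Gamma^-(B_{j^*}^-)$ are computed in the residual graph $G \setminus \bigcup_{j < j^*} B_j^-$. Since $w_m \notin \bigcup_{j \leq i} B_j^- \supseteq \bigcup_{j \leq j^*} B_j^-$, the vertex $w_m$ is still present in that residual graph and lies outside $B_{j^*}^-$. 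The edge $(w_m, w_{m+1})$ then certifies $w_m \in \Gamma^-(B_{j^*}^-) = S_{j^*}^- \subseteq S$, contradicting $Q \subseteq G \setminus S$.

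The main subtlety, and the step I would be careful about, is that the in-boundary $S_{j^*}^-$ is defined with respect to the \emph{residual} graph at the time ball $j^*$ is grown, not with respect to the full graph $G$; a node that would be a predecessor of $B_{j^*}^-$ in $G$ but has already been absorbed into some earlier $B_j^-$ is not placed in $S_{j^*}^-$. Choosing $m$ to be the \emph{last} index of $Q$ outside $\bigcup_{j \leq i} B_j^-$ --- rather than, say, the first index inside $B_i^-$ --- is exactly what guarantees that $w_m$ has not already been swallowed by an earlier ball, so the in-boundary definition applies cleanly and no induction on $i$ is required.
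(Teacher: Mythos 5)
Your proof is correct and matches the paper's argument essentially step for step: both identify an edge of $Q$ that crosses from $V \setminus \bigcup_{j \leq i} B_j^-$ into $\bigcup_{j \leq i} B_j^-$, observe that the tail of that edge is still present in the residual graph $G \setminus \bigcup_{j < j^*} B_j^-$ at the time $B_{j^*}^-$ is grown (precisely because it lies outside all of $B_1^-, \ldots, B_i^-$), and conclude it belongs to $S_{j^*}^- \subseteq S$. The paper picks an arbitrary such crossing edge rather than insisting on the last index outside $\bigcup_{j \leq i} B_j^-$, but these are the same set of candidate edges and the argument is identical; your discussion of why one must cross the boundary of $\bigcup_{j \leq i} B_j^-$ rather than of $B_i^-$ alone is a helpful articulation of a point the paper leaves implicit.
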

\begin{proof}
  We prove the claim for in-balls; the argument for out-balls is analogous. 
  Fix $i$. 
  For ease of notation, we write $\calB_i$ to denote $\cup_{j \leq i} B_j^-$. 
  We will show that no node in $\calB_i$ is reachable from $V \setminus \calB_i$;
  this suffices to prove the claim.
  Let $u \in \calB_i$, 
  $w \in V \setminus \calB_i$, and consider any 
  $w$-$u$ path $Q$ in $G$. We will show that $Q$ must contain at least one node of $S$. 
  There must be some edge $(w', u')$ in $Q$ going from 
  $V \setminus \calB_i$ to $\calB_i$. 
  Let $j \leq i$ be such that $u' \in B_j^-$. Since $w' \in V \setminus \calB_i$,
  $w' \in V(G)$ at the time that $B_j^-$ is constructed. Thus 
  $w' \in S_j^- \subseteq S$ as desired.
\end{proof}

\begin{lemma}
\label{lem:one_path_cuts} 
  Let $S$ be given by \PathCutting($G, P, \delta$). 
  For all $u,v \in V$, if there exists some $u$-$v$ path $Q$ in 
  $G \setminus S$ such that $Q$ intersects $P$, then $d(u,v) \leq 
  3 \delta$.\footnote{This proof follows the same structure as 
  that of \cite{KS22} but has been significantly simplified.}
\end{lemma}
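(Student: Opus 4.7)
The plan is to exploit the fact that any $u$-$v$ path $Q$ in $G \setminus S$ which meets $P$ at some node $v_j$ forces $u$ and $v$ to lie in balls whose centers sandwich $v_j$ along $P$; routing from $u$ to $v$ through those two centers, rather than through $v_j$ itself, should give the sharp $3\delta$ bound.

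First I would split $Q$ as $Q_1 \circ Q_2$ at an intersection point $v_j \in V(Q) \cap V(P)$. Since \PathCutting{} grows out-balls and in-balls until every vertex of $P$ is absorbed, I can fix indices $a, b$ with $v_j \in B_a^+ \cap B_b^-$. Applying Claim~\ref{claim:one_path_reachability} to $Q_1$ (together with $v_j \in B_b^-$) places $u$ in some $B_{b'}^-$ with $b' \leq b$, and the symmetric application to $Q_2$ and $B_a^+$ places $v$ in some $B_{a'}^+$ with $a' \leq a$. Let $c^-$ and $c^+$ denote the centers of $B_{b'}^-$ and $B_{a'}^+$. The \RegionGrow{} radius bound immediately yields $d(u, c^-) \leq \delta$ and $d(c^+, v) \leq \delta$, so the task reduces to showing $d(c^-, c^+) \leq \delta$ and then invoking the triangle inequality.

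To bound $d(c^-, c^+)$, I would establish by induction on $i$ the invariant that at the start of iteration $i$ of the in-ball phase, every $v_k$ with $k \le \curr_{i-1}$ has already been absorbed into $B_1^- \cup \cdots \cup B_{i-1}^-$; the inductive step uses that $\argmin_{j > \curr}\{v_j \notin B_{i-1}^-\}$ skips exactly the nodes just swept up by the previous ball. Because the $B_i^-$ are pairwise disjoint and $v_j \in B_b^-$ with $b' \leq b$, the node $v_j$ is still uncovered when iteration $b'$ begins, and the invariant then forces $\curr_{b'} \leq j$, i.e.\ $c^- \leq_P v_j$. The symmetric invariant for the backward-sweeping out-ball phase gives $v_j \leq_P c^+$. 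Hence $c^-, v_j, c^+$ appear in this order along the shortest path $P$, so $P[c^-, c^+]$ is itself a shortest $c^-$-$c^+$ path of length at most $\text{length}(P) \leq \delta$, and chaining gives $d(u,v) \leq 3\delta$.

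The hard part will be the sharpness of the constant: routing the triangle inequality through $v_j$ alone yields only $d(u,v) \leq d(u,v_j) + d(v_j,v) \leq 2\delta + 2\delta = 4\delta$, because each half pays both a ball radius and a full traversal of $P$. The $3\delta$ bound genuinely requires routing through $c^-$ and $c^+$ so that the two traversals of $P$ collapse into a single subpath. This places the ordering $c^- \leq_P v_j \leq_P c^+$ at the heart of the argument, and pinning it down requires a careful unpacking of the \PathCutting{} iteration: the $\argmin$ formula literally inspects only the previous ball, yet the conclusion must hold relative to \emph{all} previous balls, and it is the clean inductive invariant above that bridges this gap.
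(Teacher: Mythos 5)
Your plan is correct and follows essentially the same approach as the paper: route the triangle inequality through the centers $c^- = v_{in}(u)$ and $c^+ = v_{out}(v)$ of the in- and out-balls containing $u$ and $v$, using \Cref{claim:one_path_reachability} to establish $c^- \leq_P c^+$ so that the two traversals of $P$ collapse into a single subpath of $P$ of length at most $\delta$. Your one simplification is working with a single intersection point $v_j \in V(Q) \cap V(P)$ rather than the paper's first and last intersection points $a, b$ (when traversing $Q$ from $u$ to $v$), which makes the middle link $v_{in}(a) \leq_P v_{in}(b)$ in the paper's ordering chain trivial; your inductive invariant for the ball-indexing bookkeeping is also a correct way to formalize the paper's informal statement that $v_{\curr}$ is the closest node on $P$ not yet absorbed into any ball.
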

\begin{proof}
  For any node $u \in \cup_i B_i^-$, we let $v_{in}(u)$ denote the center of the 
  in-ball containing $u$; note that this is unique since the balls 
  $B_i^-$ are disjoint. For any $u \in \cup_i B_i^+$, we let $v_{out}(u)$ 
  denote the center of the out-ball containing $u$. Note that by construction, 
  if $u \in P$ then
  $v_{in}(u) \leq_P u$ and $u \leq_P v_{out}(u)$, since in-balls are centered at 
  the first remaining node in $P$ and out-balls are centered at the last remaining 
  node in $P$.

  Fix $u, w \in V$. Let $Q$ be any $u$-$w$ path in $G \setminus S$ that intersects $P$. 
  Let $a, b$ denote the first and last nodes in $Q \cap P$ when traversing 
  $Q$ from $u$ to $w$. Note that $a$ may appear before or after $b$ in $P$.
  Since $a, b \in P$, they both must be contained in $\cup_i B_i^-$ and in 
  $\cup_i B_i^+$. By \Cref{claim:one_path_reachability}, since $u$ can reach $a$ in 
  $G \setminus S$, $u \in \cup_i B_i^-$ and $v_{in}(u) \leq_P v_{in}(a)$. 
  Similarly, since $b$ can reach $w$ in $G \setminus S$,
  $w \in \cup_i B_i^+$ and $v_{out}(b) \leq_P v_{out}(w)$ 
  (recall that the index of the out-ball is smaller when the center is further, since 
  the construction of out-balls starts at the end).
  Since $a$ can reach $b$ in $G \setminus S$, $v_{in}(a) \leq_P v_{in}(b)$. 
  See \Cref{fig:path_proof}.
  \begin{figure}
    \centering
    \includegraphics[width=\linewidth]{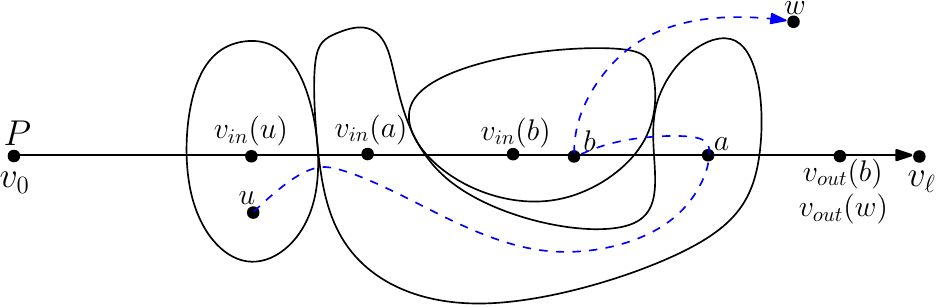}
    \caption{Here $Q$ is a $u$-$w$ path shown with a dashed blue line. All 
    regions drawn are in-balls; we omit the out-balls for visual clarity as 
    the analysis is similar. 
    In this example, $b \leq_P a$ to illustrate that $a$ and $b$ can appear 
    in any order, even though $v_{in}(a) \leq_P v_{in}(b)$ necessarily.}
    \label{fig:path_proof}
  \end{figure}
  Thus 
  \[v_{in}(u) \leq_P v_{in}(a) \leq_P v_{in}(b) \leq_P b \leq_P v_{out}(b)
  \leq_P v_{out}(w).\]
  In particular, since the length of $P$ is at most $\delta$, 
  $d_x(v_{in}(u), v_{out}(w)) \leq \delta$. By construction of the in and out balls,
  $d_x(u, v_{in}(u)) \leq \delta$ and $d_x(v_{out}(w), w) \leq \delta$. 
  Thus $d_x(u, w) \leq 3\delta$.
\end{proof}

\begin{remark}
  Cutting the balls sequentially in this manner is necessary because we are 
  working with directed graphs. For instance, suppose there exists some $u \in V$,
  $i > 0$ such that $d(u, v_i)$ is small (at most $\delta$) and 
  $d(u, v_0)$ is large (at least $3\delta$). This is possible since $P$ is directed.
  By the ball cutting procedure, if $B_i^-$ is constructed before 
  $B_j^-$, then all paths from $B_j^-$ to $B_i^-$  are cut, but such a guarantee 
  does not exist for long paths from $B_i^-$ to $B_j^-$.
  Then, cutting in-balls sequentially \emph{starting} at $v_0$ is the only way 
  to ensure that $u$ can no longer reach $v_0$ in $G \setminus S$. 
  The reasoning for out-balls is similar.
\end{remark}

\subsubsection{Recursive Process}
\label{sec:separator_proof}

\begin{proof}[Proof of \Cref{lem:separator_main}]
  Let $G = (V, E)$ be a $\delta$-bounded planar digraph with root $v$ and 
  cost and length functions $c, x$ on $V$. 

  The algorithm is as follows. We apply \Cref{cor:thorup_dir} to obtain 
  $P := P_1 \cup P_2 \cup P_3$; note that each of these paths has length 
  at most $\delta$. For $i = 1, 2, 3$, let
  $S_i = \PathCutting(G, P_i, \delta)$ and set $S_G = \cup_{i \in [3]} S_i$. 
  We then contract $P$ into $v$, and for each weakly connected component 
  $C$ of $G \setminus P$, recurse on $C' := C \cup \{v\}$, with 
  $c(v) = x_v = 0$.
  Note that $C'$ 
  is a minor of $G$ and is thus planar. Furthermore, $C'$ is 
  $\delta$-bounded with root $v$ since contracting $P$ into $v$ does not 
  increase distances from $v$. At each recursive call on the connected component 
  $C$, we let $S_C$ denote the resulting cut set. We proceed recursively 
  until each remaining weakly connected component consists of at most 
  one vertex. Since the number of vertices are halved at each step, 
  there are at most $\lceil \log n \rceil$ levels of recursion. For 
  $t = 0, \dots, \lceil \log n \rceil$, we let $\calC_t$ denote the set 
  of all weakly connected components at the $t^{th}$ level of recursion. 
  We return $S := \cup_{t} \cup_{C \in \calC_t} S_C$.

  It is clear that this algorithm is deterministic and efficient, since 
  \PathCutting, \RegionGrow, and the algorithm given by \Cref{cor:thorup_dir}
  are all deterministic and efficient. For the cost bound,
  by \Cref{lem:one_path_cost}
  $
    c(S) = \sum_t \sum_{C \in \calC_t} c(S_C)
    \leq \sum_t \sum_{C \in \calC_t} O\left(\frac {\log n}{\delta}\right) \vol(V(C)).
  $
  For $C, C'$ in the same level of recursion, the only node in 
  $V(C) \cap V(C')$ is the contracted root $v$. Since this has volume $0$,
  $\sum_{C \in \calC_t} \vol(V(C)) \leq \vol(V)$ for all $t$.  
  Thus $c(S) \leq \sum_t O\left(\frac {\log n}{\delta}\right) \vol(V)
  \leq O\left(\frac {\log^2 n}{\delta}\right) \vol(V)$. 

  For the path guarantee, let $u, v \in V$ and let $Q$ be any $u$-$v$ path 
  in $G \setminus S$. 
  At the end of the recursive process, $u$ and $v$ are not in the same 
  connected component. Let $C$ be the last component in which $Q \subseteq C$. 
  Since $Q$ is not fully contained in any components after $C$, it must be the 
  case that $Q$ intersects the corresponding planar separator $P$. 
  Since $\PathCutting(C, P, \delta) \subseteq S$, by \Cref{lem:one_path_cuts},
  $\delta(u,v) \leq 3\delta$. 
\end{proof}

\begin{remark}
  \label{rem:contraction}
  We mention a minor nuance of the above argument. At each recursive step after 
  the first, the contracted root is not a vertex in $G$, and thus cannot be 
  included in $S$, since we require that $S \subseteq V(G)$. To avoid this issue, 
  we can replace all calls $\PathCutting(C, P, \delta)$ with 
  $\PathCutting(C, P \setminus \{v\}, \delta)$ where $v$ is the root. This 
  does not affect the feasibility, since if any path $Q$ intersects the 
  separator $P$ at the contracted root, it must have intersected the separator 
  at an earlier level as well. Thus, for every path $Q$, there is some 
  component $C$ such that $Q \subseteq C$ and $Q$ intersects the corresponding 
  separator at a non-root vertex.
\end{remark}

\subsection{Proof of Main Theorem}

We prove our main result \Cref{thm:main}.
Let $G = (V, E)$ be a directed planar graph with node capacities 
$c: V \to \R_\geq 0$. We solve the LP relaxation for node-weighted 
Multicut to obtain an optimal LP solution $\{x_v\}_{v \in V}$. 
Note that $\vol(V)$ is exactly the value of the optimal LP solution,
which we denote $\opt_{LP}$.

\paragraph{Algorithm:} 
We assume $G$ 
is weakly connected; else, each weakly connected component of $G$ can be 
handled separately. 
Set $S_L = \{v: x_v \geq \frac \delta {6\log n}\}$.
We apply the layering decomposition 
\Cref{lem:layering_main} on $G \setminus S_L$ with $\delta = 1/12$
to obtain $S'$, along with 
a series of $\delta$-bounded minors $G_0, \dots, G_t$. Note that each $G_i$ 
is planar, since it is a minor of $G$. We then apply 
\Cref{lem:separator_main} on each $G_i$ to obtain a corresponding set $S_i$.
Return $S := S_L \cup S' \cup \cup_{i = 0}^t S_i$.

\paragraph{Cost:} The cost of $S_L$ is 
$c(S_L) \leq \frac{6\log n}{\delta} \sum_{v \in S_L} c(v)x_v \leq 
O(\log n) \vol(V)$. By \Cref{lem:layering_main}, the cost of $S'$ 
is $c(S') \leq O\left(\frac {\log n}{\delta}\right) \vol(V)
= O(\log n) \vol(V)$. By \Cref{lem:separator_main}, for each 
$i = 0, \dots, t$ $c(S_i) \leq O\left(\frac {\log^2 n}{\delta}\right)
\vol(V(G_i))$. By \Cref{lem:layering_main}, $\sum_{i = 0}^t \vol(V(G_i)) 
\leq \vol(V)$. Thus $c(S) \leq O(\log^2 n) \vol(V)$, giving us our 
desired approximation ratio.

\paragraph{Feasibility:} We show that all terminal pairs 
are separated in $G \setminus S$. Suppose for the sake of contradiction that 
some terminal pair $(s_i, t_i)$ is not separated. Let $P$ be an
$s_i$-$t_i$ path in $G \setminus S$. 
By \Cref{lem:layering_main}, for some $j \in \{0, \dots, t-1\}$, 
$P$ can be decomposed into $P_0 \circ P_1$ such that 
$P_0 \subseteq G_j$ and $P_1 \subseteq G_{j+1}$. Let $(u_0, u_1)$ be the edge 
at which $P_0$ and $P_1$ split; that is, $P_0$ is an $s_i$-$u_0$ path 
and $P_1$ is a $u_1$-$t_i$ path. Since $P \subseteq G \setminus S$,
$s_i$ can reach $u_0$ in $G_j \setminus S_j$. 
Thus by \Cref{lem:separator_main}, $d_x(s_i, u_0) \leq 3\delta = \frac 1 4$. 
Similarly, $d_x(u_1, t_i) \leq \frac 1 4$. This implies 
$d_x(s_i, t_i) \leq \frac 1 2$, contradicting feasibility of the LP.

\begin{remark}
  Following the discussion in \Cref{rem:contraction}, it is important that 
  the contracted roots of each $G_i$ are not included in $S$. 
  \Cref{lem:layering_paths} guarantees that all paths in $G \setminus S'$ 
  are contained in two consecutive layers; this does not include the contracted 
  roots. Thus, in the separator step, 
  it suffices to restrict attention to $u, v$ that are reachable in 
  $G_i \setminus v_i$. This does not change the argument, thus we omit 
  this detail from the above proofs for simplicity.
\end{remark}
\section{Conclusion}
\label{sec:conclusion}
We showed that the $O(\log^2 n)$-factor approximation algorithm of
Kawarabayashi and Sidiropoulos \cite{KS22} for Multicut in planar
digraphs can be extended to the node-weighted case. Are there better
approximations for Multicut and Sparsest Cut in planar digraphs?  What
are tight bounds on the the flow-cut gap?  Constant factor
approximations and flow-cut gaps are not ruled out. Extending the
results to digraphs supported by minor-free graphs is another natural
direction.

\bibliographystyle{plainurl}
\bibliography{references}

\appendix
\section{Proof of Node-Weighted Directed Region Growing}
\label{sec:gvy_proof}

\gvy*

\begin{proof}
  We prove the lemma statement for the out-ball case. The in-ball case can 
  be obtained directly by reversing the directions of all edges. 
  We define a series of balls centered at $v$: for 
  $i = 1, \dots, 3 + 2\lceil \log n\rceil$,
  let $r_i = i \cdot \delta/(6\log n)$ and consider $B_i := B^+(v, r_i)$. 
  Note that for all $i$, $r_i \in [\frac\delta{6\log n},\delta)$.

  First we show that for all $i$, if $u \in \Gamma^+(B_i)$, then 
  $u \in B_{i+1}$. Since $u \in \Gamma^+(B_i)$, there is some 
  $w \in B_i$ such that $(w, u) \in E$. Thus $d(v, u) \leq d(v, w) 
  + x_u \leq r_i + \frac \delta {6\log n} = r_{i+1}$. 
  
  Next, we show that there must be some $i$ such that $\vol(B_{i+2}) 
  \leq 2\vol(B_i) + \vol(V)/n$. If not, this would imply that 
  $[\vol(B_{i+2}) + \vol(V)/n] > 2[\vol(B_i) + \vol(V)/n]$ for all $i$. 
  In particular, 
  $[\vol(B_i) + \vol(V)/n] > 2^{(i-1)/2}[\vol(B_1) + \vol(V)/n]$, 
  so 
  $[\vol(B_{3+2\lceil log n\rceil}) + \vol(V)/n] > 2n[\vol(B_1) + \vol(V)/n]
  \geq 2\vol(V)$, a contradiction. Fix such an $i$. 

  Suppose we were to select $r$ uniformly at random from $[r_i, r_{i+1})$,
  and let $B_r$ denote $B^+(v, r)$. Consider the probability that some 
  $u \in V$ is contained in $\Gamma^+(B_r)$. By earlier analysis, this is 
  only possible if $u \in B_{i+2}$.
  Let $P$ be the shortest 
  $v$-$u$ path, and let $w$ be the node preceding $u$ on this path. We claim 
  that $u \in \Gamma^+(B_r)$ if and only if $w \in B_r$ and $u \notin B_r$. 
  One direction is simple; since $(w, u) \in E$, so $w \in B_r, u \notin B_r$
  implies $u \in \Gamma^+(B_r)$. For the other direction, suppose 
  $u \in \Gamma^+(B_r)$. Then there exists some $w' \in B_r$ such that 
  $(w', u) \in E$. Since $P$ is the shortest $v$-$u$ path, it must be the case 
  that $d(v, u) = d(v, w) + x_u \leq d(v, w') + x_u$. Thus 
  $d(v, w) \leq d(v, w') \leq r$, so $w \in B_r$.
  In particular, this implies that 
  \[\Pr[u \in \Gamma^+(B_r)] = \Pr[d(v, w) \leq r < d(v, u)] \leq 
  \frac{d(v, u) - d(v, w)}{r_{i+1}-r_i} = \frac{6\log n}{\delta} \cdot x_u.\]
  Thus 
  \begin{align*}
    E[c(\Gamma^+(B_r))] &\leq \sum_{u \in B_{i+2}} c(u)\Pr[u \in \Gamma^+(B_r)]
    = \frac {6\log n}{\delta}\sum_{u \in B_{i+2}} c(u)x_u =  
    \frac {6\log n}{\delta} \vol(B_{i+2}). 
  \end{align*}
  By choice of $i$, this is at most 
  $\frac{6\log n}{\delta} [2\vol(B_i) + \vol(V)/n]$. Since $B_i \subseteq B_r$,
  $\vol(B_i) \leq \vol(B_r)$, concluding the proof of existence of $r$.

  To find $r$ deterministically, one can check all relevant points between 
  $r_i$ and $r_{i+1}$, since there are at most $n$ values where $B_r$ changes.
\end{proof}
 
\end{document}